\def \treq {\stackrel{\tiny \Delta}{=}}
\newtheorem{lemma}{Lemma}
\newcommand{\q}{\ensuremath{\mathbf q}}
\newcommand{\Q}{\ensuremath{\mathbf Q}}
\newcommand{\Qi}{\ensuremath{{Q}}^{-1}}
\newcommand{\N}{\ensuremath{\mathcal N}}
\newcommand{\e}{\mathbf{e}}
\newcommand{\E}{\ensuremath{\mathbb E}}
\newcommand{\R}{\ensuremath{\mathbb R}}
\newcommand{\p}{\ensuremath{\mathbf p}}
\renewcommand{\P}{\ensuremath{\mathbf P}}
\newcommand{\LL}{\ensuremath{\mathbf L}}
\newcommand{\HH}{\ensuremath{\mathbf H}}
\newcommand{\OO}{\ensuremath{\mathcal  O}}
\newcommand{\Prob}[1]{(\textbf{P}{#1})}
\newcommand{\cst}[1]{\textbf{C}{#1}}
\newcommand{\s}{\ensuremath{\mathbf s}}
\newcommand{\prop}{\text{JTRD}}
\newcommand{\rd}{\text{RDFT}}
\newcommand{\td}{\text{TDFR}}
\newcommand{\east}{\ensuremath{\mathrm{EAST}}}
\def \treq {\stackrel{\tiny \Delta}{=}}
\newcommand{\removelatexerror}{\let\@latex@error\@gobble}
\definecolor{orcidlogocol}{HTML}{A6CE39}
\tikzset{
  orcidlogo/.pic={
    \fill[orcidlogocol] svg{M256,128c0,70.7-57.3,128-128,128C57.3,256,0,198.7,0,128C0,57.3,57.3,0,128,0C198.7,0,256,57.3,256,128z};
    \fill[white] svg{M86.3,186.2H70.9V79.1h15.4v48.4V186.2z}
                 svg{M108.9,79.1h41.6c39.6,0,57,28.3,57,53.6c0,27.5-21.5,53.6-56.8,53.6h-41.8V79.1z M124.3,172.4h24.5c34.9,0,42.9-26.5,42.9-39.7c0-21.5-13.7-39.7-43.7-39.7h-23.7V172.4z}
                 svg{M88.7,56.8c0,5.5-4.5,10.1-10.1,10.1c-5.6,0-10.1-4.6-10.1-10.1c0-5.6,4.5-10.1,10.1-10.1C84.2,46.7,88.7,51.3,88.7,56.8z};
  }
}
\newcommand\orcidicon[1]{\href{https://orcid.org/#1}{\mbox{\scalerel*{
\begin{tikzpicture}[yscale=-1,transform shape]
\pic{orcidlogo};
\end{tikzpicture}
}{|}}}}
\definecolor{darkpastelgreen}{rgb}{0.01, 0.75, 0.24}
\def\ps@IEEEtitlepagestyle{%
  \def\@oddfoot{\mycopyrightnotice}%
  \def\@oddhead{\hbox{}\@IEEEheaderstyle\leftmark\hfil\thepage}\relax
  \def\@evenhead{\@IEEEheaderstyle\thepage\hfil\leftmark\hbox{}}\relax
  \def\@evenfoot{}%
}
\def\mycopyrightnotice{%
  \begin{minipage}{\textwidth}
  \centering \scriptsize
  \copyright~2023 IEEE. Personal use of this material is permitted.  Permission from IEEE must be obtained for all other uses, in any current or future media, including reprinting/republishing this material for advertising or promotional purposes, creating new collective works, for resale or redistribution to servers or lists, or reuse of any copyrighted component of this work in other works.
  \end{minipage}
}
\begin{document}

\bstctlcite{IEEEexample:BSTcontrol}

\title{Secure Short-Packet Communications via UAV-Enabled Mobile Relaying: Joint Resource Optimization and 3D Trajectory Design}


\author{Milad Tatar Mamaghani\textsuperscript{\orcidicon{0000-0002-3953-7230}},~\IEEEmembership{Member,~IEEE}, Xiangyun Zhou\textsuperscript{\orcidicon{0000-0001-8973-9079}},~\IEEEmembership{Fellow,~IEEE}\\ Nan Yang\textsuperscript{\orcidicon{0000-0002-9373-5289}},~\IEEEmembership{Senior Member,~IEEE}, and A. Lee Swindlehurst\textsuperscript{\orcidicon{0000-0002-0521-3107}},~\IEEEmembership{Fellow,~IEEE}
\vspace*{-3mm}\thanks{This work was supported by the Australian Research Council’s Discovery Projects funding scheme (Grant ID: DP220101318).}
\thanks{M. Tatar Mamaghani, X. Zhou, and N. Yang are with the School of Engineering, Australian National University, Canberra, ACT 2601, Australia (email: \href{mailto:milad.tatarmamaghani@anu.edu.au}{\textcolor{black}{milad.tatarmamaghani@anu.edu.au}}; \href{mailto:xiangyun.zhou@anu.edu.au}{\textcolor{black}{xiangyun.zhou@anu.edu.au}}; \href{mailto:nan.yang@anu.edu.au}{\textcolor{black}{nan.yang@anu.edu.au}}).}
\thanks{A. L. Swindlehurst is with the Center for Pervasive Communications and
Computing, Henry Samueli School of Engineering, University of California, Irvine, CA 92697, USA (email: \href{mailto:swindle@uci.edu}{\textcolor{black}{swindle@uci.edu}}).}
\thanks{A portion of this work has been presented at the IEEE Global Communications Conference, 4-8 December 2023, Kuala Lumpur, Malaysia \cite{mamaghani2023globecom}.}}

\markboth{}%
{Milad \MakeLowercase{\textit{et al.}}: Secure Short-Packet Communications via UAV-Enabled Relaying}

\maketitle

\begin{abstract}

Short-packet communication (SPC) and unmanned aerial vehicles (UAVs) are anticipated to play crucial roles in the development of 5G-and-beyond wireless networks and the Internet of Things (IoT). In this paper, we propose a secure SPC system, where a UAV serves as a mobile decode-and-forward (DF) relay, periodically receiving and relaying small data packets from a remote IoT device to its receiver in two hops with strict latency requirements, in the presence of an eavesdropper. This system requires careful optimization of important design parameters, such as the coding blocklengths of both hops, transmit powers, and the UAV’s trajectory. While the overall optimization problem is nonconvex, we tackle it by applying a block successive convex approximation (BSCA) approach to divide the original problem into three subproblems and solve them separately. Then, an overall iterative algorithm is proposed to obtain the final design with guaranteed convergence. Our proposed low-complexity algorithm incorporates robust trajectory design and resource management to optimize the effective average secrecy throughput of the communication system over the course of the UAV-relay’s mission. Simulation results demonstrate significant performance improvements compared to various benchmark schemes and provide useful design insights on the coding blocklengths and transmit powers along the trajectory of the UAV.

\end{abstract}

\begin{IEEEkeywords}
Short-packet transmissions, machine-type communications, UAV-aided relaying, physical-layer security,  3D trajectory design, resource management, and convex optimization.
\end{IEEEkeywords}

\IEEEpeerreviewmaketitle

\section{Introduction}
\IEEEPARstart{S}{hort-packet} communication (SPC) has recently emerged as a critical paradigm for meeting the stringent demands of massive machine-type communications (mMTC) and ultra-reliable and low-latency communications (uRLLC) for the support of the Internet of Things (IoT) in 5G-and-beyond wireless networks \cite{Durisi2016, Al-Fuqaha2015, Popovski2014}. Indeed, unlike conventional wireless communications, SPC generally involves the transmission of small data packets whose size can potentially go down to tens of bytes, wherein the length  of control information is comparable with that of the data payload.  This approach is gaining popularity in the context of IoT, since there is a growing demand for real-time and efficient communication between devices. SPC also plays an important role  in mission-critical applications such as intelligent transportation, telemedicine, and industrial automation, where stringent latency requirements are in place \cite{Schulz2017}.


One of the critical challenges of SPC is guaranteeing communication reliability, considering the fact that utilizing short packets for transmissions is inherently accompanied by severely degraded channel coding gain \cite{Yang2019a, Shirvanimoghaddam2019}. In addition,  SPC in IoT networks encounters severe security challenges, primarily due to the broadcast nature of wireless media as well as the mission-critical and private SPC data that IoT networks often need to share \cite{Feng2021}. SPC-IoT communication systems are vulnerable to eavesdropping. For example, in the case of intelligent transportation, if a message is intercepted by an adversary, private information such as user identity or location may be exposed. Thus, ensuring both the stringent reliability and security of SPC-IoT systems presents significant challenges. 
Also, resource-constrained IoT nodes have practical constraints such as limited energy, leading to the use of lightweight secrecy protocols 
to reduce the demand for resources. To this end, various physical-layer security (PLS) techniques have been investigated to provide protection through wiretap coding and smart signaling by exploiting the physical-layer properties of wireless channels \cite{Wang2019c, Sun2019b, Mukherjee2014, Hamamreh2018, Yang2022}.

Nonetheless, achieving the required secrecy by applying PLS techniques is nontrivial for mission-critical SPC-IoT applications. First, conventional PLS schemes and the adopted wiretap codes are primarily based on the assumption of infinite blocklength codes whose packet lengths are on the order of several kilobytes, as opposed to tens of bytes for SPC. Secondly, PLS-based designs for traditional large-packet communications are generally based on the so-called \textit{Secrecy Capacity} \cite{Poor2017}, from Shannon's classical information theory. However, secrecy capacity is not applicable for SPC scenarios due to the finite blocklength assumption, and therefore it is not appropriate to design classical PLS approaches for SPC scenarios since it could lead to suboptimal solutions. This calls for rethinking the analysis and design of PLS for the SPC-IoT systems.

Concurrently, unmanned aerial vehicles (UAVs) have recently become an increasingly popular technology 
for a variety of wireless and IoT applications. Particularly, UAVs are envisioned to be integrated into IoT systems to provide a range of benefits such as aerial base station (BS) or mobile relaying,  remote sensing and processing, real-time monitoring, mobile edge computing (MEC), etc.,  due to their relatively rapid on-demand deployment, inexpensive operation, low-cost maintenance, and flexible three-dimensional (3D) mobility \cite{Wu2021, Sun2021, Geraci2021, Akyildiz2020, li2022resource}. One of the main advantages of UAVs in IoT systems is their ability to gather data from difficult-to-reach geographic areas or hazardous locations. Also, UAVs are particularly useful in applications where timely data transmission is critical, such as emergency response and disaster relief efforts. Since they operate above ground away from obstacles, UAVs are effective at reducing signal attenuation in wireless links arising from shadowing and blockage effects. Accordingly, battery-limited IoT devices, enjoying favorable air-ground (AG) channel conditions, need significantly lower power to transmit data to UAVs, leading to a significant increase in their lifespan. However, security challenges such as eavesdropping for UAV-aided IoT communications are of significant importance and need to be meticulously addressed \cite{Jiang2021a, Wang2019, Wu2019}.

\subsection{Prior Studies and Motivation}
To safeguard different wireless networks, PLS techniques have been extensively studied in the literature for various applications such as cooperative non-orthogonal multiple access (NOMA) networks \cite{Yuan2019}, opportunistic relaying in IoT networks \cite{Ding2019}, full-duplex relaying \cite{Chen2015b}, energy harvesting-based device-to-device communications \cite{Xu2021b}, and UAV-aided secure wireless communication \cite{Mamaghani, guo2022ris, wang2022robust}. \textcolor{black}{In particular, \cite{guo2022ris} investigated resource allocation for secure downlink UAV transmissions towards a terrestrial user empowered by a reconfigurable intelligent surface (RIS) where a friendly UAV-jammer is employed for communication secrecy. The work in \cite{wang2022robust} explored improving PLS for a UAV-aided cognitive relay system with the help of cooperative jamming and robust resource management.
However, previous studies such as \cite{guo2022ris} and \cite{wang2022robust} have typically considered the design and optimization of infinite blocklength transmissions, which as mentioned earlier are not suitable for SPC-IoT scenarios due to their severely reduced channel coding gain and the impossibility of guaranteeing error-free transmissions.} Recently, there has been significant research interest in exploring to what extent finite blocklength transmissions incur a loss in capacity \cite{Polyanskiy2010}. Yang \textit{et al.} in their seminal work \cite{Yang2019} addressed the achievable secrecy rate (SR) for a general wiretap channel given fixed reliability and secrecy constraints in the finite blocklength regime, and analytically derived tight achievability and converse bounds on the maximum secret communication rate. Following these foundational results, the authors in \cite{Zheng2020} studied the design of PLS in SPC over fading channels by investigating secrecy throughput for both single and multiple antenna transmitters. In \cite{Feng2022}, the authors investigated a setting similar to \cite{Zheng2020} from another perspective, where they defined a new outage probability metric relevant to the characteristics of SPC. In \cite{Wang2019e}, the authors studied secure SPC in a mission-critical IoT system in the presence of an external multi-antenna eavesdropper. Further,  \cite{Li2022} explored secure transmission 
rates for downlink SPC with a queuing delay requirement under different assumptions on the channel state information (CSI), where the authors derived closed-form optimal power control policies for some special scenarios. We stress that the previous studies have considered system design with only static communication nodes and adopted a fixed number of information bits per short-packet transmission.  Thus, the frameworks developed in \cite{Zheng2020, Feng2022, Wang2019e, Li2022} may not be applicable for highly dynamic UAV-IoT scenarios or when a variable amount of data is generated from IoT devices for transmission.
In light of this, some recent research efforts have considered UAV-IoT network designs with downlink SPC, e.g., \cite{Wang2020b, Liu2023}. Nevertheless, blocklength optimization is completely ignored in the aforementioned works due to the assumption of simple one-hop transmissions. As a result, further research is necessary to fully understand the performance of SPC in UAV-IoT networks.

\subsection{Our Contributions}
\textcolor{black}{Inspired by the aforementioned research, in this work we study a secure UAV-aided relaying system in which sensitive short packets need to be periodically transmitted from a remote IoT device to a receiver with a stringent latency requirement while combating a passive eavesdropper in an uncertain location.  We design the network to provide robust communication resource allocation and 3D UAV trajectory over the duration of the mission to improve the overall system secrecy performance. Our contributions are detailed below.}
\textcolor{black}{ { \begin{itemize}
\item We propose a secure SPC-IoT system that exploits a mobile UAV as an aerial auxiliary relay with flexible 3D maneuverability to facilitate end-to-end confidential and reliable short-packet data exchange between distant IoT transceivers, effectively mitigate the risk of a ground eavesdropper in an uncertain location, and improve the reliability and secrecy performance of the network impacted by short-packet transmissions.
\item We formulate a new optimization problem in terms of the effective average
    secrecy throughput (EAST) for the considered UAV-aided SPC system under security, reliability, latency, and mobility constraints. The formulated problem is nonconvex with a nondeterministic objective function, and hence challenging to solve optimally.
    \item  To tackle the challenging nonconvex problem, we derive an analytical and tractable expression for a lower bound on the objective function. Then, based on the idea of block successive convex approximation (BSCA), we divide the original problem into three subproblems comprising joint power allocation, coding blocklength optimization, and 3D trajectory design, and we efficiently tackle each via convex optimization. We then propose a low-complexity algorithm combining these solutions to optimize the overall EAST performance.  We also discuss the complexity and convergence of our proposed algorithm.
    \item We conduct extensive simulations to offer useful insights into the performance of the proposed joint resource allocation and trajectory design, and highlight its benefits compared to other known benchmarks. We observe that the joint trajectory and resource optimization can significantly improve the EAST performance compared to designs with either fixed trajectory or predetermined communication power and coding blocklengths. 
    \item We investigate the impact of key system parameters on the overall system performance. In particular, we find that in our joint design, both the transmit power and uplink coding blocklength adaptively increase when the mobile UAV-assisted relay moves from the transmitter towards the receiver, while the relaying power maintains its maximum value with a decreasing downlink coding blocklength for EAST enhancement. Additionally, our joint design demonstrates high robustness in EAST regardless of the level of uncertainty in the eavesdropper's location.
\end{itemize}}}

The remainder of this paper is organized as follows. Section~\ref{sec:sysmodel} introduces our proposed UAV-aided relaying system model with imperfect location information about the eavesdropper, followed by the formulation of an optimization problem to improve the overall system performance in Section \ref{sec:problem}. In Section~\ref{sec:solution}, we present an efficient approach to tackle the optimization problem.  Section \ref{sec:numerical} discusses selected numerical results and the  impact of key system parameters. Finally, conclusions are drawn in Section~\ref{sec:conclusion}.

\section{System Model and Problem Statement}\label{sec:sysmodel}

\begin{figure}[!t]
\centering
\includegraphics[width=\columnwidth]{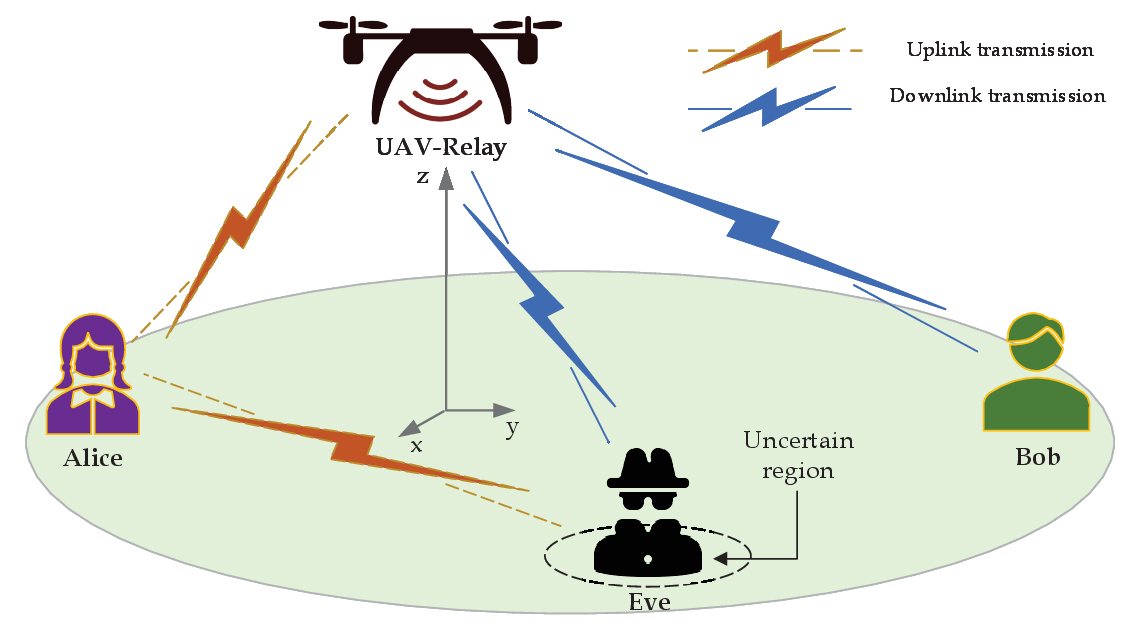}
\caption{System model for secure UAV-aided short-packet relaying in the presence of an adversary with location uncertainty.}
\label{fig1}
\end{figure}
We consider a UAV-assisted IoT communication system with secure SPC as illustrated in Fig. \ref{fig1}, wherein a transmitter (Alice) periodically sends short packets containing confidential information to a designated remote receiver (Bob) with the help of a trusted mobile UAV-Relay (UR) while a malicious passive eavesdropper (Eve), whose location  is not perfectly known, attempts to overhear the ongoing confidential transmissions. Here, SPC is considered due to its importance for various delay-sensitive IoT applications. In practice, Alice can be considered as an IoT device that periodically generates a short packet of sensitive information from the environment and, if feasible, immediately transmits it to a desired remote receiver Bob with a stringent latency requirement for monitoring, controlling or sensing applications. We refer to a \textit{timeslot} as a period of $\delta_t$ seconds, and assume that a packet is generated at the beginning of each timeslot. In this work, we allow Alice to generate and transmit a variable amount of information bits in each timeslot to accommodate different tasks, or to provide a suitable amount of information to Bob according to the communication channel conditions and other requirements.

\subsection{Periodic Secure Short-packet Relaying}
We assume that there exists no direct link between Alice and Bob due to the distance or blockages, and thus, a UAV-mounted relay, thanks to its flexible mobility and line-of-sight (LoS)-dominant AG channels, is employed to facilitate the end-to-end SPC. We assume that  all the communication nodes are equipped with only a single antenna, as commonly done for resource-constrained IoT devices.
In addition, the mobile relaying strategy adopted by the UR is assumed to be the decode-and-forward (DF) protocol with time division duplexing (TDD) on a shared bandwidth $W$ for both reception and transmission. Compared with amplify-and-forward (AF) relaying \cite{TatarMamaghani2018}, DF relaying is beneficial for the considered UAV-aided scenario since Bob is not required to obtain CSI for the first hop, which in turn results in lower overhead, and DF relaying avoids noise amplification, leading to generally better signal reception quality at Bob.

\begin{figure}[!t]
\centering
\includegraphics[width=\columnwidth]{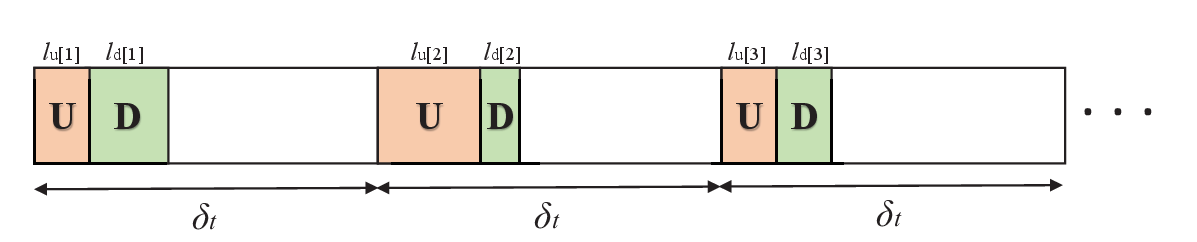}
\caption{\textcolor{black}{Illustration of periodic short-packet UAV relaying, where $\mathbf{U}$ and $\mathbf{D}$ indicate uplink and downlink transmissions, respectively. Sensitive information is generated and made available for transmission at the beginning of each timeslot $\delta_t$, with the remaining duration devoted to sensing and collecting data from the environment.
}}
\label{fig2}
\end{figure}
\textcolor{black}{We assume that the UR-aided DF relaying for SPC occurs at the start of each timeslot $\delta_t$, as illustrated in Fig. \ref{fig2}. We assume that confidential information is generated and made available for secure SPC, while the rest of the timeslot is dedicated to sensing the environment and gathering information by Alice for the next transmission via UAV-assisted DF relaying. Recall that DF relaying consists of two phases. In the first phase (i.e., uplink transmission), Alice transmits one short packet containing sensitive information to the UR over $l_u[n]$ channel uses, where $n=\{1, 2, \cdots\}$ denotes the index of each timeslot, and then the UR decodes the received signal to obtain the transmitted confidential message. In the second phase (i.e., downlink transmission), the UR encodes the message with a different codebook for security purposes, forwarding it to the desired destination Bob over $l_d[n]$  channel uses and Bob retrieves the original confidential information. In the meantime,  Eve wiretaps the ongoing transmissions to obtain confidential data and thus poses a security threat.
Note that end-to-end SPC generally occupies much less time than one timeslot, i.e. $\delta_i[n] \ll \delta_t~\forall n$ and $i\in\{u,~d\}$, where $\delta_u[n]$ and $\delta_d[n]$ indicate the duration required by a finite blocklength SPC for the uplink and downlink, respectively. Since the time taken for one channel use is inversely proportional to the available bandwidth, we have $\delta_i[n] = \frac{l_i[n]}{W},~\forall n$.}

\subsection{System Assumptions and Constraints}
Without loss of generality, we consider a 3D Cartesian coordinate system, where Alice, Bob, and Eve are respectively located on the ground at $\q_a=[x_a,y_a,0]^T$, $\q_b=[x_b,y_b,0]^T$, and $\q_e=[x_e,y_e,0]^T$, where $[\cdot]^T$ represents the transpose. While the location of the legitimate network nodes is perfectly known, Eve attempts to hide her location via passive eavesdropping. Nonetheless, we assume that an approximate estimate of Eve's location\footnote{Some information may be available about Eve's location due to geographical constraints, prior information about adversarial operations, or unintended emissions from Eve's RF electronics.}
within a given uncertainty region can be obtained and tracked such that
\begin{align}\label{uncertainty}
    \| \q_e - \tilde{\q}_e\| \leq \Delta_e,
\end{align}
where $ \tilde{\q}_e$ is  an estimate of Eve's 3D location within a sphere of radius $\Delta_e$, and $\|\cdot\|$ represents the Euclidean norm. 

\subsubsection{UAV trajectory constraints}
We assume that the UR's flight time horizon is set to $T$ seconds, and is divided into
$N$ timeslots such that  $T= N \delta_t$, and the timeslots are indexed by $n\in\N=\{1, 2, \cdots, N\}$. Since the SPC duration is small,  we can assume that the UR's location over the transmission phase in each timeslot remains approximately unchanged, but varies from one timeslot to another. 
Therefore, the UR's 3D location in timeslot $n$  can be denoted by $\q_r[n]=[x[n], y[n], z[n]]^T~\forall n\in\N$. \textcolor{black}{With this setting, the UR's continuous trajectory can be approximated by a 3D sequence consisting of $N$-waypoints  $\{\q_r[n]\}^{N}_{n=1}$.} UAVs can generally control their horizontal and vertical speeds independently \cite{You2019}. Thus, assuming that the UR's initial and final locations are denoted by $\q_i=[x_i, y_i, z_i]^T$ and $\q_f=[x_f, y_f, z_f]^T$, respectively, 
the following mobility constraints are  imposed on the UR's 3D trajectory:
\begin{subequations}
\begin{align}\label{const:mob}
&\cst{1}:~\q_r[1] = \q_i,\quad \q_r[N] = \q_f,\\
\begin{split}
&\cst{2}:~\sqrt{(x[n+1]-x[n])^2+(y[n+1]-y[n])^2} \\
&\qquad\quad\leq v^{max}_{xy} \delta_t,~ \forall n\in \N\setminus N
\end{split}\\
&\cst{3}:~\|z[n+1] - z[n] \| \leq v^{max}_{z} \delta_t,~ \forall n\in \N\setminus N\\
&\cst{4}:~ H^{min} \leq z[n]  \leq H^{max},~ \forall n\in \N
\end{align}
\end{subequations}
where \cst{2} and \cst{3} limit horizontal and vertical displacements of the UR for consecutive timeslots, and $v^{max}_{xy}$ and $v^{max}_{z}$ denote the maximum velocity of the UR in the horizontal and vertical directions, respectively. Furthermore, \cst{4} indicates that the altitude of the UR needs to be larger than a minimum required height $H^{min}$, to avoid collision with buildings or other obstacles \cite{Zeng2019}, and smaller than a maximum permitted height $H^{max}$.


\subsubsection{Channel modeling}
The AG channels are assumed to be dominated by path-loss with negligible fading \cite{TatarMamaghani2020}. Thus, for the Alice-UR link, the UR-Eve link, and the UR-Bob link, denoted as $h_{ra}[n]$, $h_{re}[n]$, and $h_{rb}[n]~\forall n$, respectively,  we express their LoS-dominant channel power gains as
\begin{align}
    h_{rj}[n]=\frac{\beta_0}{\|\q_r[n] - \q_j\|^2},~ \forall n\in \N, ~ j \in\{a, e, b\}
\end{align}
where $\beta_0$ denotes the path-loss at a reference distance $d_0$ under omnidirectional propagation, and is given by $\beta_0=\left(\frac{C}{4\pi d_0 f_c}\right)^2$, with $C$ being the speed of light and $f_c$ denoting  the carrier frequency. Furthermore, since both Alice and Eve are terrestrial nodes, the channel model for the Alice-Eve link constitutes both distance-dependent attenuation and small-scale fading \cite{Zhang2021}, the power gain of which can be represented as
\begin{align}
    h_{ae}[n]= \frac{\beta_0}{\|\q_a - \q_e\|^\alpha}\zeta[n],~\forall n\in \N
\end{align}
where $\zeta[n]$ is a unit-mean exponential random variable, i.e., $\zeta[n] \sim \mathbf{Exp}(1)$, that accounts for independent and identically distributed (i.i.d) Rayleigh fading, and $\alpha$ is the corresponding environmental path-loss exponent, with a typical range between $2 < \alpha \leq 4$.

{\it{Remark:} In this work, we assume that $h_{ra}[n]$ and $h_{rb}[n]~\forall n$ are perfectly known by Alice and the UR, respectively, through channel reciprocity and  training. However, it is assumed that only statistical information about Eve's channel is available at the legitimate transmitters. We further assume that Eve has perfect knowledge of the CSI from Alice and the UR to herself, i.e.,  $h_{ae}[n]$ and $h_{re}[n]~\forall n$, respectively, which can be considered as a worst-case scenario from the point of secrecy.}

\subsubsection{Constraints on radio resources and latency}
Since the channel conditions and the UR's location are assumed to remain stable during the short-packet transmission in each timeslot but can change between timeslots, 
we allow the transmit power to be chosen according to the channel condition of a timeslot. In other words, the transmit power of Alice and the UR can change from one timeslot to another. Furthermore, we impose a total power constraint over all timeslots for both Alice and the UR. Accordingly, denoting $\{p_a[n], \forall n\}$ and $\{p_r[n], \forall n\}$ as the transmit power per channel use in timeslot $n$ for Alice and the UR, respectively, the total power budget constraints can be expressed as 
\begin{subequations}
\begin{align}
&\cst{5}:~\sum_{n=1}^{N}p_a[n]l_u[n] \leq  {P}^{tot}_a,\label{const:avepow_alice}\\
&\cst{6}:~\sum_{n=1}^{N}p_r[n]l_d[n] \leq  {P}^{tot}_r, \label{const:avepow_uav}
\end{align}
\end{subequations}
where ${P}^{tot}_a$ and ${P}^{tot}_r$ represent the maximum available power budget for Alice and the UR over the mission time, respectively. The constraints \eqref{const:avepow_alice} and \eqref{const:avepow_uav}  ensure that the amount of energy utilized to transmit short packets of confidential information by Alice and the UR will be sufficient for the duration of the mission. In addition, it is  common practice to design SPC systems with appropriate power control mechanisms to limit the  transmit power in each timeslot, i.e.,  within the peak power limit, which guarantees the reliable and safe operation of the communication system \cite{Cui2018}. Accordingly,  we further adopt the following peak power constraints:
\begin{subequations}
\begin{align}
&\cst{7}:~0 \leq p_a[n] \leq P^{max}_a,~ \forall n\in \N\\
&\cst{8}:~0 \leq p_r[n] \leq {P}^{max}_r,~ \forall n\in \N
\end{align}
\end{subequations}
where ${P}^{max}_a$ and ${P}^{max}_r$ represent the maximum peak power for Alice and the UR, respectively. 

In SPC, delay tolerance is often a critical factor since, to be useful,  the communication system must deliver sensitive information within a short deadline. On the other hand, a larger blocklength results in a longer transmission duration and increased delay. So, for the considered short-packet delay-sensitive system, the delay requirement can be imposed by constraining the number of total blocklengths per transmission to be less than a maximum allowable end-to-end delay, which can be expressed mathematically as
\begin{align}
    &\cst{9}:~\sum_{i} l_i[n] \leq L^{max},~ i\in\{u, d\},~ \forall n\in \N\\
    &\cst{10}:~l_i[n] \in \mathbb{Z}^+,~ i\in\{u, d\},~ \forall n\in \N
\end{align}
where $L^{max}$ denotes the maximum end-to-end latency. 

\subsection{Secrecy Metric for SPC}
Human-centered secure communication systems usually assume channel coding with a sufficiently large (infinite) blocklength, and then consider the so-called \textit{secrecy capacity}  as the performance metric. With positive secrecy capacity, the legitimate source-destination pair can achieve perfectly reliable and secure communications based on the well-known wiretap coding scheme \cite{Mukherjee2014}. On the other hand,  for machine-type SPC, perfect secrecy cannot be guaranteed  due to the assumption of  finite blocklength transmissions. In light of this, \cite{Yang2019} addressed the achievable SR for SPC given the legitimate receiver's decoding error probability and tolerable information leakage to the illegitimate receiver, which we will exploit in the following analysis.

\subsubsection{SR for uplink SPC}
In the uplink transmission of timeslot $n$, Alice generates a short packet and transmits it over $l_u[n]$ channel uses to the UR, while Eve attempts to passively overhear the transmission. According to \cite{Yang2019}, given a reliability constraint on the UR's decoding error probability, denoted by $\varepsilon_r$, and  a security constraint in terms of information leakage to Eve, denoted by $\eta_e$, we express the achievable average SR in bits per channel use for the finite blocklength uplink transmissions in timeslot $n$, denoted by $R^{sec}_u[n]$, as
\begin{align}\label{secrate_1}
    \begin{split}
        R^{sec}_u[n] &= \E_{\zeta[n]}\bigg\{\Big[C^{sec}_u[n]- \sqrt{\frac{V(\gamma_r[n])}{l_u[n]}}\Qi\left( \varepsilon_r\right) \\
        &\qquad- \sqrt{\frac{V(\gamma_{ae}[n])}{l_u[n]}}\Qi\left( \eta_e\right) \Big]_{+}\bigg\},~\forall n\in \N
    \end{split}
\end{align}
where $C^{sec}_u[n]=\log_2\left(\frac{1+\gamma_r[n]}{1+\gamma_{ae}[n]}\right)$ indicates the uplink secrecy capacity with infinite blocklength in timeslot $n$, 
$[x]_+=\max\{x,0\}$, $\E_x\{\cdot\}$ indicates expectation over the random variable $x$, and $\Qi(x)$ is the inverse of the complementary Gaussian cumulative distribution function $Q(x)$, defined as $Q(x)=\int^{\infty}_{x}\frac{1}{\sqrt{2\pi}}\e^{-\frac{r^2}{2}}dr$. Further, $\gamma_r[n]$ and $\gamma_{ae}[n]$, denoting the received signal-to-noise ratios (SNRs) at the UR and Eve in timeslot $n$, are given respectively by
\begin{align}
\gamma_r[n] &= \frac{p_a[n]h_{ra}[n]}{\sigma^2_r[n]} = \frac{p_a[n]\rho_{r}[n]}{\|\q_r[n] - \q_a\|^2},~\forall n\in \N\\
\gamma_{ae}[n]&= \frac{p_a[n]h_{ae}[n]}{\sigma^2_e[n]} = \frac{p_a[n]\rho_{e}[n]}{\|\q_a - \q_e\|^\alpha}{\zeta}[n],~\forall n\in \N
\end{align}
where $\sigma^2_r[n]$ and $\sigma^2_e[n]$ denote the additive white Gaussian noise (AWGN) power at the UR and Eve in timeslot $n$, respectively, and $\rho_{r}[n] = \frac{\beta_0}{\sigma^2_r[n]}$ and $\rho_{e}[n] = \frac{\beta_0}{\sigma^2_e[n]}$. Furthermore, the function $V(\cdot)$ indicates the statistical variation of the channel (a.k.a the channel dispersion), which can be mathematically expressed, according to  \cite{Yang2019},  as
\begin{align}
   V(\gamma) = \log^2_2(\e)\left[1-\left(1+\gamma\right)^{-2}\right],
\end{align}
As such, $V(\gamma_r[n])$ and $V(\gamma_{ae}[n])$ represent the corresponding channel dispersion in timeslot $n$ for the Alice-UR and Alice-Eve's links, respectively. 
Note that the channel dispersion is a monotonically increasing  function of SNR.

\subsubsection{SR for downlink SPC}
In the downlink transmission of timeslot $n$, the UR decodes the confidential information bits transmitted by Alice, encodes them with a different codebook, and forwards the generated short packet towards Bob over $l_d[n]$ channel uses.  While Eve can still wiretap the transmissions from the UR, since the signals from both Alice and the UR have been encoded with different codebooks, she cannot take advantage of a diversity combining strategy by performing, for example, maximum ratio combining (MRC) to improve her reception quality. Accordingly, given Bob's decoding error probability $\varepsilon_b$, the achievable  SR for short-packet downlink transmissions in timeslot $n$, denoted by $R^{sec}_d[n]$, is given by
\begin{align}\label{secrate_2}
  \begin{split}
       R^{sec}_d[n] &= \bigg[ C^{sec}_d[n]- \sqrt{\frac{ V(\gamma_b[n])}{l_d[n]}}\Qi\left( \varepsilon_b\right) \\
       &\qquad- \sqrt{\frac{ V(\gamma_{re}[n])}{l_d[n]}}\Qi\left( \eta_e\right) \bigg]_{+},~ \forall n\in\N
  \end{split} 
\end{align}
where $C^{sec}_d[n]=\log_2\left(\frac{1+\gamma_b[n]}{1+\gamma_{re}[n]}\right)$ specifies the downlink secrecy capacity with infinite blocklength in timeslot $n$, and $\gamma_b[n]$ and $\gamma_{re}[n]$ denote  the received SNRs at Bob and Eve in timeslot $n$, given respectively by
\begin{align}
\gamma_b[n] &= \frac{p_r[n]h_{rb}[n]}{\sigma^2_b[n]} = \frac{p_r[n]\rho_{b}[n]}{\|\q_r[n] - \q_b\|^2},~\forall n\in \N\\
\gamma_{re}[n]&= \frac{p_r[n]h_{re}[n]}{\sigma^2_e[n]} = \frac{p_r[n]\rho_{e}[n]}{\|\q_r[n]- \q_e\|^2},~\forall n\in \N
\end{align}
where $\sigma^2_b[n]$ denotes the power of the AWGN at Bob and  $\rho_{b}[n] = \frac{\beta_0}{\sigma^2_b[n]}$.
Additionally, $V(\gamma_{b}[n])$ and $V(\gamma_{re}[n])$ indicate the channel dispersion for the UR-Bob and UR-Eve links in timeslot $n$, respectively.

\section{Problem Formulation}\label{sec:problem}
In this section, we aim to optimize the secrecy performance of our proposed UAV-aided mobile relay with short-packet transmission by designing the transmit power for Alice and the UR $\P=\left\{\p_a=\{p_a[n],~\forall n\},~\p_r=\{p_r[n],~\forall n\}\right\}$, the transmission blocklengths $\LL=\{l_u[n],~l_d[n],~\forall n\}$, and the UR's 3D trajectory $\Q=\{\q[n], ~\forall n\}$. Assuming that Alice and the UR securely encode the short transmit packets in timeslot $n$ to sustain the desired reliability and security requirements $(\varepsilon_r, \varepsilon_b, \eta_e)$, we define the secrecy throughput metric as the rate of the effective number of securely transmitted information bits in bits-per-second (bps) at timeslot $n$ as
\begin{align}
    \bar{B}_s[n] =\frac{1}{\delta_t}{\min\left\{R^{sec}_u[n]l_u[n] (1- \varepsilon_r), R^{sec}_d[n]l_d[n] (1- \varepsilon_b) \right\}},
 \end{align}

The resulting optimization maximizes the Effective Average Secrecy Throughput (\east) over the mission duration and can be  formulated as
\begin{align}\label{opt_prob}
\Prob{}:& \stackrel{}{\underset{\{\P, \Q, \LL\}}{\mathrm{max}}}~~ \east = \frac{1}{N}\sum_{n=1}^{N} \bar{B}_s[n]\nonumber\\
&~~~~~~\text{s.t.}~~~~~ \cst{1}-\cst{10}.
\end{align}
Note that \Prob{} is a nonconvex optimization problem due to the nonconvex objective function with nonsmooth operator $[\cdot]_+$, highly coupled optimization variables, and nonconvex constraints \cst{5}$-$\cst{6} and \cst{9}. Thus, it is too challenging to solve optimally. \textcolor{black}{We note that the nonsmoothness of the objective function in \Prob{} can be handled since at the optimal point, $\bar{B}_s[n] \ge 0$ must hold. Otherwise, setting $p_a[n]=0$ and/or $p_r[n]=0$ in the given timeslot leads to $\bar{B}_s[n]=0$, which violates the optimality of the solution. In light of this, we can remove the nonsmoothness operator from the objective function without impacting the optimal solution.}
Furthermore, some terms in the objective function are implicit due to $\E_{\zeta[n]}\{\cdot\}$ in $R^{sec}_u[n]$ as well as Eve's location uncertainty. In the following lemmas, we tackle the aforementioned problems to make \Prob{} more tractable.

\begin{lemma}\label{lemma_1}
A closed-form lower-bound expression on the uplink short-packet secure transmission \eqref{secrate_1} can be obtained as
\begin{subequations}
\begin{align}
\hspace{-3mm}  R^{sec}_u[n] &\geq 
        A_0[n] +\E_{\zeta[n]}\{-\log_2(1+\gamma_{ae}[n])\}  \nonumber\\
       &\qquad+A_1[n]\E_{\zeta[n]}\left\{-\sqrt{1-(1+\gamma_{ae}[n])^{-2}}\right\},\\
    &\geq \log_2\left(\frac{1+\gamma_r[n]}{1+\bar{\gamma}_{ae}[n]}\right)- \sqrt{\frac{V(\gamma_r[n])}{l_u[n]}}\Qi\left( \varepsilon_r\right)\nonumber\\
    &\qquad- \sqrt{\frac{V(\bar{\gamma}_{ae}[n])}{l_u[n]}}\Qi\left( \eta_e\right),~ \forall n\in\N
\end{align}
\end{subequations}
where $\bar{\gamma}_{ae}[n] =  \frac{p_a[n]\rho_{e}[n]}{\|\q_a - \q_e\|^\alpha}~\forall n$, and $A_0[n]$ and $A_1[n]$ are given respectively by
\[A_0[n] = \log_2(1+\gamma_{r}[n])- \sqrt{\frac{V(\gamma_r[n])}{l_u[n]}}\Qi\left( \varepsilon_r\right),~\forall n\in \N \]
\[A_1[n] = \frac{\log_2 \e}{\sqrt{l_u[n]}}\Qi(\eta_e),~\forall n\in \N 
\]
\begin{proof}
The proof follows from Jensen's inequality, a fundamental theorem in mathematics which states that $\E\{f(x)\} \geq f(\E\{x\})$ for a convex function $f(x)$, and considering the convexity of the functions $f_1(x)=\max\{x,0\}$, $f_2(x)=-\log_2(1+x)$, and $f_3(x)=-\sqrt{1-(1+x)^{-2}}$.
\end{proof}
\end{lemma}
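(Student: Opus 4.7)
The plan is to peel off the expectation in two stages using Jensen's inequality, exploiting the fact that only $\gamma_{ae}[n]$ depends on the random variable $\zeta[n]$ while $\gamma_r[n]$ is deterministic.

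First I would dispense with the $[\,\cdot\,]_+$ operator. Since the argument inside the expectation has $f_1(y)=\max\{y,0\}$ applied to it and $f_1$ is convex, Jensen's inequality yields $\mathbb{E}\{[Y]_+\}\geq [\mathbb{E}\{Y\}]_+$, so a valid lower bound is obtained by pushing the expectation through the $[\,\cdot\,]_+$. Because at an optimal operating point $R^{sec}_u[n]$ must be nonnegative (otherwise setting $p_a[n]=0$ would do no worse), I drop the $[\,\cdot\,]_+$ altogether and work with the raw expression. Next I separate the $\zeta[n]$-independent terms from the $\zeta[n]$-dependent ones: $\log_2(1+\gamma_r[n])$ and $\sqrt{V(\gamma_r[n])/l_u[n]}\,Q^{-1}(\varepsilon_r)$ are constant with respect to $\zeta[n]$ and combine into $A_0[n]$. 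What remains under the expectation is $-\log_2(1+\gamma_{ae}[n])$ together with $-\sqrt{V(\gamma_{ae}[n])/l_u[n]}\,Q^{-1}(\eta_e)$.

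For the second term, I use the identity $V(\gamma)=\log_2^2(\mathrm{e})[1-(1+\gamma)^{-2}]$ to rewrite it as $A_1[n]\cdot\bigl(-\sqrt{1-(1+\gamma_{ae}[n])^{-2}}\bigr)$, where $A_1[n]=\frac{\log_2 \mathrm{e}}{\sqrt{l_u[n]}}Q^{-1}(\eta_e)\geq 0$ (under the reasonable assumption $\eta_e\leq 1/2$). This gives exactly the first displayed inequality in the statement. Now, for the second (tighter-looking) bound, I apply Jensen separately to $f_2(x)=-\log_2(1+x)$ and $f_3(x)=-\sqrt{1-(1+x)^{-2}}$. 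Both are convex functions of $x\geq 0$: convexity of $f_2$ is immediate from $f_2''(x)=(\log_2 \mathrm{e})(1+x)^{-2}>0$, and for $f_3$ it can be verified by a direct second-derivative computation (or equivalently by substituting $u=(1+x)^{-1}\in(0,1]$ and checking that $-\sqrt{1-u^2}$, viewed as a function of $x$ through the convex decreasing map $u(x)$, remains convex). Because $\mathbb{E}\{\gamma_{ae}[n]\}=\bar{\gamma}_{ae}[n]$ by the unit-mean property of $\zeta[n]$, Jensen gives $\mathbb{E}\{f_2(\gamma_{ae}[n])\}\geq f_2(\bar{\gamma}_{ae}[n])$ and $\mathbb{E}\{f_3(\gamma_{ae}[n])\}\geq f_3(\bar{\gamma}_{ae}[n])$; multiplying the latter by $A_1[n]\geq 0$ preserves the inequality.

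Collecting the pieces, I get $A_0[n]-\log_2(1+\bar{\gamma}_{ae}[n])+A_1[n]\bigl(-\sqrt{1-(1+\bar{\gamma}_{ae}[n])^{-2}}\bigr)$, and repackaging the last summand back into the dispersion form via $\log_2(\mathrm{e})\sqrt{1-(1+\bar{\gamma}_{ae}[n])^{-2}}=\sqrt{V(\bar{\gamma}_{ae}[n])}$ reproduces exactly the second inequality of the lemma. The only non-routine step is the convexity check for $f_3$; I expect that to be the main obstacle, since $f_2$ and $f_1$ are textbook convex whereas $f_3$ requires either an explicit second-derivative calculation (showing $f_3''(x)\geq 0$ for all $x\geq 0$) or a slightly cleaner argument through the substitution $u=(1+x)^{-1}$ followed by composition with a concave decreasing change of variable. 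Once that is settled, the rest of the derivation is a bookkeeping exercise of assembling the constant and expectation-carrying terms into the stated form.
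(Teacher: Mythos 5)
Your proof is correct and follows essentially the same route as the paper's: Jensen's inequality applied to the convex functions $f_1(x)=\max\{x,0\}$, $f_2(x)=-\log_2(1+x)$, and $f_3(x)=-\sqrt{1-(1+x)^{-2}}$, together with the unit-mean property of $\zeta[n]$ giving $\E\{\gamma_{ae}[n]\}=\bar{\gamma}_{ae}[n]$. The paper's proof is only a one-sentence sketch naming exactly these ingredients, and your write-up fills in the same argument (including the convexity verification for $f_3$ and the sign condition $\Qi(\eta_e)\geq 0$) without deviating from it.
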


Now, we deal with the uncertainty in Eve's location by applying a worst-case analysis to facilitate the optimization problem.
\begin{lemma}\label{lemma_2}
Lower bounds on the achievable short-packet  SR in the downlink and uplink transmissions can be obtained as
\begin{align}
    R^{sec}_{u}[n] &\geq \log_2\left({1+\gamma_r[n]}\right)  - \sqrt{\frac{V(\gamma_r[n])}{l_u[n]}}\Qi\left( \varepsilon_r\right)
    \nonumber\\
    &\qquad-\log_2\left({1+\tilde{\gamma}_{ae}[n]}\right) - \sqrt{\frac{V(\tilde{\gamma}_{ae}[n])}{l_u[n]}}\Qi\left( \eta_e\right),\nonumber\\
    &\qquad\treq \tilde{R}^{sec}_{u}[n],~ \forall n\in\N \label{uplink_lb}\\
     R^{sec}_d[n]& \geq  \log_2\left({1+\gamma_b[n]}\right)  - \sqrt{\frac{V(\gamma_b[n])}{l_d[n]}}\Qi\left( \varepsilon_b\right) \nonumber\\
    &\qquad-\log_2\left({1+\tilde{\gamma}_{re}[n]}\right) - \sqrt{\frac{V(\tilde{\gamma}_{re}[n])}{l_d[n]}}\Qi\left( \eta_e\right),\nonumber\\
    &\qquad\treq \tilde{R}^{sec}_d[n],~ \forall n\in\N \label{dllink_lb}\
\end{align}
where $\tilde{\gamma}_{ae}[n]$ and $\tilde{\gamma}_{re}[n]$ are given respectively by
\[\tilde{\gamma}_{ae}[n] = \frac{p_a[n]\rho_{e}[n]}{(\|\q_a - \tilde{\q}_e\| - \Delta_e)^\alpha},~\forall n\in \N\]

\[\tilde{\gamma}_{re}[n] = \frac{p_r[n]\rho_{e}[n]}{(\|\q_r[n] - \tilde{\q}_e\| - \Delta_e)^2},~\forall n\in \N \]

\end{lemma}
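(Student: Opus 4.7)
The plan is to derive both inequalities by first obtaining an SNR bound on Eve's channel that holds uniformly over the location uncertainty region in \eqref{uncertainty}, and then invoking a monotonicity argument to transport this bound through the secrecy rate expressions provided by Lemma~\ref{lemma_1} (for the uplink) and by \eqref{secrate_2} (for the downlink).

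For the SNR bound, the triangle inequality combined with \eqref{uncertainty} gives
\[
\|\q_a - \q_e\| \;\geq\; \|\q_a - \tilde{\q}_e\| - \|\tilde{\q}_e - \q_e\| \;\geq\; \|\q_a - \tilde{\q}_e\| - \Delta_e,
\]
and similarly $\|\q_r[n] - \q_e\| \geq \|\q_r[n] - \tilde{\q}_e\| - \Delta_e$. Under the practical assumption already stated in the excerpt that $\|\q_j - \tilde{\q}_e\| \geq \Delta_e$ for $j\in\{a,b\}$, and with the UR enforced to stay above $H^{\min}$ by \cst{4}, the right-hand sides are non-negative, so raising them to the path-loss exponent preserves the inequality. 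Substituting into the denominators of $\bar{\gamma}_{ae}[n]$ and $\gamma_{re}[n]$ yields $\bar{\gamma}_{ae}[n] \leq \tilde{\gamma}_{ae}[n]$ and $\gamma_{re}[n] \leq \tilde{\gamma}_{re}[n]$; that is, the proposed quantities are exactly the worst-case (largest possible) eavesdropper SNRs over the uncertainty sphere.

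The second step is to verify that the right-hand side of Lemma~\ref{lemma_1}'s bound, as well as the downlink expression inside $[\cdot]_+$ in \eqref{secrate_2}, is monotonically non-increasing in Eve's SNR. Note that $-\log_2(1+\gamma)$ is decreasing in $\gamma$, and since $V(\gamma) = \log^2_2(\e)\bigl[1 - (1+\gamma)^{-2}\bigr]$ is strictly increasing in $\gamma$ while $\Qi(\eta_e) > 0$ for the meaningful secrecy regime $\eta_e < 1/2$, the penalty term $-\sqrt{V(\gamma)/l_i[n]}\,\Qi(\eta_e)$ is also decreasing in $\gamma$. Combining these monotonicity facts with the SNR inequalities from the previous step immediately delivers \eqref{uplink_lb} and \eqref{dllink_lb}.

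The main obstacle is bookkeeping rather than technical depth, particularly on the downlink side: one must invoke the remark preceding Lemma~\ref{lemma_1} that the nonsmooth operator $[\cdot]_+$ can be removed without loss of optimality, so that the substitution $\gamma_{re}[n]\to\tilde{\gamma}_{re}[n]$ produces a genuine real-valued lower bound rather than a quantity that might drop below zero. A minor care point is also that the uplink derivation inherits Jensen-based looseness from Lemma~\ref{lemma_1} and composes it with the worst-case location looseness here; since both operations preserve the ``$\geq$'' direction, the chain $R^{sec}_u[n] \geq \text{(Lemma \ref{lemma_1})} \geq \tilde{R}^{sec}_u[n]$ closes without contradiction. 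No further machinery beyond the triangle inequality and elementary monotonicity is needed.
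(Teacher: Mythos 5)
Your proposal is correct and follows essentially the same route as the paper: invoke Lemma~\ref{lemma_1} for the uplink, observe that the rate expressions are monotonically decreasing in Eve's SNR (equivalently, increasing in the Alice--Eve and UR--Eve distances), and bound those distances from below via the reverse triangle inequality together with \eqref{uncertainty}. Your extra remarks on the sign of $\Qi(\eta_e)$ and on $[\cdot]_+$ are sound but not strictly needed, since $[x]_+\geq x$ already makes the substituted expression a valid (possibly negative) lower bound.
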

\begin{proof}
    The proof follows from Lemma \ref{lemma_1} and considering the fact that $R^{sec}_{u}[n]$ and $R^{sec}_{d}[n]$ are monotonically decreasing functions with respect to (w.r.t.) the terms $\| \q_a - \q_e \|$ and $\| \q_r[n] - \q_e \|$, respectively. Then, the lower bounds given in \eqref{uplink_lb} and \eqref{dllink_lb} are obtained by using the following inequality
  \begin{subequations}
      \begin{align} \label{revtriangle1}
        \| \q_i - \q_e \| &\geq \big|\| \q_i - \tilde{\q}_e\| - \| \tilde{\q}_e - \q_e\|\big|  \\
        &\geq \| \q_i - \tilde{\q}_e\| - \Delta_e,  \label{revtriangle}
    \end{align}    
  \end{subequations}
where $\q_i \in \{\{\q_r[n],~\forall n\}, \q_a\}$. \textcolor{black}{Note that \eqref{revtriangle1} is derived by applying the reverse triangle inequality and \eqref{revtriangle} is obtained from the inequality given in \eqref{uncertainty} by making the reasonable assumption that Eve's uncertainty region, $\Delta_e$, is smaller than the distance between the Alice-Eve and UAV-Eve pairs, i.e., $\|\q_{i} - \tilde{\q}_e\| \geq \Delta_e$.}
\end{proof}

Now, by introducing the slack variable vector $\pmb{\tau}=\{\tau[n], \forall n\}$, \Prob{} can be converted to a more tractable form, whose objective function is differentiable and serves as a lower bound to that of the original problem:
\begin{subequations}
\begin{align}\label{opt_pro_alt}
\Prob{1}:& \stackrel{}{\underset{\{\P,~\Q,~\LL,~\pmb{\tau}\}}{\mathrm{max}}}~~~\frac{1}{T}\sum_{n=1}^{N} \tau[n] \nonumber\\
&\quad\text{s.t.}\quad\cst{1}-\cst{10},\\
&\qquad  \tilde{R}^{sec}_u[n] l_u[n](1- \varepsilon_r) \geq \tau[n],~\forall n\in \N \\
&\qquad  \tilde{R}^{sec}_d[n] l_d[[n](1-\varepsilon_b) \geq \tau[n],~\forall n\in \N 
\end{align}
\end{subequations}
In the sequel, we propose a low-complexity iterative solution to solve the above problem  based on the BSCA algorithm, wherein we optimize each block of variables while keeping the others unchanged in an alternating manner. Such an algorithm  generally approaches a sub-optimal solution while ensuring convergence.

\section {Proposed Low-complexity Design}\label{sec:solution}
In this section,  we divide \Prob{1} into three subproblems: i) joint power optimization for Alice and the UR, ii) short-packet blocklength optimization, and iii) joint optimization of the UAV's motion and altitude, which are tackled separately. Thereafter, we propose an efficient overall algorithm, considering the impact of the block optimization on the joint design. Note that in the sequel, we omit the fixed multiplicative term $\frac{1}{T}$ from the objective function, which does not impact  the proposed solution.

\subsection{Joint Power Optimization}
In this subsection, we  jointly optimize the transmit powers of Alice and the UR, i.e., $\P=\{\p_a,~ \p_r\}$, while keeping other optimization variables fixed. Introducing slack variable vectors $\s=\{s_a[n], s_r[n],~\forall n\}$ and $\pmb{\nu}=\{\nu_a[n], \nu_r[n]~\forall n\}$,  \Prob{1} can be equivalently reformulated as
\begin{subequations}
\begin{align}\label{pow_subprob}
\Prob{2}:& \stackrel{}{\underset{\{\p_a, \p_r, \pmb{\tau}, \s, \pmb{\nu}\}}{\mathrm{max}}}~~~\sum_{n=1}^{N} \tau[n] \nonumber\\
&~~~~~~\text{s.t.}~~~~~ \cst{5}-\cst{8},\\
\begin{split}
&\ln\left(\frac{1+k_{1, j} p_j[n]} {1+k_{2, j} p_j[n]}\right)  \geq k_{3, j}s_j[n]+k_{4, j}\nu_j[n] \\
&\qquad+k_{5, j}\tau[n],~j\in\{a, r\},~\forall n\in\N\label{p2.cst2}
\end{split}\\
\begin{split}
& s^2_j[n] \geq  1-(1+k_{1, j}p_j[n])^{-2},\\
&\qquad s_j[n] \in \R^+,~j\in\{a, r\},~\forall n\in\N\label{p2.cst3}
\end{split}\\
\begin{split}
&\nu^2_j[n] \geq 1-(1+k_{2, j}p_j[n])^{-2},\\
&\qquad \nu_j[n] \in \R^+,~j\in\{a, r\},~\forall n\in\N\label{p2.cst4}
\end{split}
\end{align}
\end{subequations}
where
\[
k_{1, a} = \frac{\rho_r[n]}{\|\q_r[n]-\q_a\|^2},\quad k_{2, a} = \frac{\rho_e[n]}{(\|\q_a-\tilde{\q}_e\|-\Delta_e)^\alpha},\]
\[k_{3, a} = \frac{\Qi(\varepsilon_r)}{\sqrt{l_u[n]}},\quad k_{4, a} = \frac{\Qi(\eta_e)}{\sqrt{l_u[n]}},\quad k_{5, a} = \frac{\ln2}{l_u[n](1- \varepsilon_r)},\]
\[k_{1, r} = \frac{\rho_b[n]}{\|\q_r[n]-\q_b\|^2},\quad k_{2, r} = \frac{\rho_e[n]}{(\|\q_r[n]-\tilde{\q}_e\|-\Delta_e)^2},\]
\[k_{3, r} = \frac{\Qi(\varepsilon_b)}{\sqrt{l_d[n]}},\quad k_{4, r} = \frac{\Qi(\eta_e)}{\sqrt{l_d[n]}},\quad k_{5, r} = \frac{\ln2}{l_d[n](1- \varepsilon_b)}.
\]
Note that \eqref{p2.cst3} and \eqref{p2.cst4} should hold with equality at the optimal point; otherwise, their values can be decreased to improve the objective function, which would violate the optimality of the solution. Nonetheless, subproblem \Prob{2} is still nonconvex due to nonconvex constraints \eqref{p2.cst2}, \eqref{p2.cst3}, \eqref{p2.cst4}. Thus, to tackle \Prob{2}, we replace these constraints with corresponding convex approximations by applying the first-order restrictive law of the Taylor expansion method at the given point \cite{Boyd2006}. Accordingly, at a local point $\p^{lo}_j=\{p^{lo}_j[n],~j\in\{a,~r\},~\forall n\}$, we  write \Prob{2} approximately as
\begin{subequations}
\begin{align}\label{pow_subprob_cvx}
\Prob{2.1}:& \stackrel{}{\underset{\{\p_a, \p_r, \pmb{\tau}, \s, \nu\}}{\mathrm{max}}}~~~\sum_{n=1}^{N} \tau[n] \nonumber\\
 \begin{split}
    &\hspace*{-10mm}\text{s.t.}\quad\cst{5}-\cst{8},\\ 
    &\hspace*{-10mm} \qquad s_j[n] \in \R^+,~\nu_j[n] \in \R^+,~\forall n\in\N
\end{split}\\
&\hspace*{-10mm}\ln (1+k_{1, j} p_j[n])  \geq k_{3, j}s_j[n]+k_{4, j}\nu_j[n]+k_{5, j}\tau[n]\nonumber\\
&\hspace*{-10mm}\qquad+k_{6, j}p_j[n]+k_{7,j},~\forall n\in\N\\
& \hspace*{-10mm}\ln s_j[n] + \ln \left(1+ k_{1, j}p_j[n]\right) \hspace{-1mm}\geq\hspace{-1mm} A_1(p^{lo}_j[n];k_{1, j}) (p_j[n]-p^{lo}_j[n]) \nonumber\\
&\hspace*{-10mm}\qquad+ A_0(p^{lo}_j[n];k_{1, j}),~\forall n\in\N\\
&\hspace*{-10mm} \ln \nu_j[n] + \ln \left(1+ k_{2, j}p_j[n]\right) \hspace{-1mm}\geq\hspace{-1mm} A_1(p^{lo}_j[n];k_{2, j}) (p_j[n]-p^{lo}_j[n]) \nonumber\\
&\hspace*{-10mm}\qquad+ A_0(p^{lo}_j[n];k_{2, j}),~\forall n\in\N
\end{align}
\end{subequations}
where 
\[k_{6, j} = \frac{k_{2, j}}{1+k_{2, j}p^{lo}_j[n]},\] 
\[k_{7,j} = \ln(1+k_{2,j}p^{lo}_j[n])-k_{6, j}p^{lo}_j[n].\] 
Also, the functions $A_0(x; k)$ and $A_1(x; k)$ are defined respectively as
\[A_0(x;k) =\frac{1}{2} \ln \left(kx\left[2+kx\right]\right),~x,k>0\]
\[A_1(x; k) = \frac{kx+1}{x \left(kx+ 2\right)},~x,k>0\] 
Since \Prob{2.1} is convex w.r.t. the optimization variables, it can be efficiently solved using standard convex optimization tools.
 
\subsection{Short-packet Blocklength Optimization}
In this subsection, we optimize the blocklength vectors of both the uplink and downlink SPC, i.e., $\LL=\{l_u[n], l_d[n],~\forall n\}$. As such,  the corresponding subproblem can be expressed as
\begin{subequations}
\begin{align}
\Prob{3}:& \stackrel{}{\underset{\LL,~\pmb{\tau} }{\mathrm{max}}}~~~\sum_{n=1}^{N} \tau[n]\nonumber\\
&\quad\text{s.t.}\quad\cst{5}, \cst{6}, \cst{9}, \cst{10}\label{p3.cst1}\\
&\hspace{-5mm}a_{0,i}l_i[n] - a_{1,i}\sqrt{l_i[n]} \geq \tau[n],~ i\in\{u, d\},~ \forall n\in\N
\end{align}
\end{subequations}
where 
\[a_{0,u} = (1-\varepsilon_r)\log_2\left(\frac{{1+\gamma_r[n]}}{{1+\tilde{\gamma}_{ae}[n]}}\right),\]
\[
a_{0,d} = (1-\varepsilon_b)\log_2\left(\frac{{1+\gamma_b[n]}}{{1+\tilde{\gamma}_{re}[n]}}\right),\]
\[a_{1,u}=(1-\varepsilon_r)\left[\sqrt{{V(\gamma_r[n])}}\Qi\left(\varepsilon_r\right)\hspace{-0.5mm} + \hspace{-0.5mm}\sqrt{V(\tilde{\gamma}_{ae}[n])}\Qi\left( \eta_e\right)\right],
\]
\[a_{1,d}=(1-\varepsilon_b)\left[\sqrt{{V(\gamma_b[n])}}\Qi\left(\varepsilon_b\right) \hspace{-0.5mm}+\hspace{-0.5mm} \sqrt{V(\tilde{\gamma}_{re}[n])}\Qi\left( \eta_e\right)\right].
\]
\Prob{3} is a nonlinear integer programming problem due to \cst{10}. One possible approach to tackle this challenge is to relax \Prob{3} into a convex optimization problem by converting the integer-valued vector $\LL$ to a positive continuous vector $\widetilde{\LL}=\{\tilde{l}_u[n], \tilde{l}_d[n],~\forall n\}$ such that
\begin{subequations}
\begin{align}
\Prob{3.1}:& \stackrel{}{\underset{\widetilde{\LL},~\pmb{\tau}}{\mathrm{max}}}~~~\sum_{n=1}^{N} \tau[n]\nonumber\\
&\quad\text{s.t.}\quad\cst{5}, \cst{6}, \cst{9} \label{p31.cst1}\\
&\hspace{-10mm}\tilde{l}_i[n] \geq 1,~ i\in\{u, d\},~ \forall n\in \N \label{p31.cst2}\\
&\hspace{-10mm} a_{0,i}\tilde{l}_i[n] - a_{1,i}\sqrt{\tilde{l}_i[n]} \geq \tau[n],~i\in\{u, d\},~ \forall n\in\N \label{p31.cst3}
\end{align}
\end{subequations}
Notice that \eqref{p31.cst3}  in \Prob{3.1} is a nonconvex constraint, following the law of the second-order derivative \cite{Boyd2006}. Thus, the convex reformulation of \Prob{3.1} at a local point $\widetilde{L}^{lo}=\{\tilde{l}^{lo}_u[n], \tilde{l}^{lo}_d[n],~\forall n\}$ can be given, using a first-order Taylor approximation, as
\begin{subequations}
\begin{align}\label{sp_subprob_cvx}
\Prob{3.2}:& \stackrel{}{\underset{\widetilde{\LL},~\pmb{\tau}}{\mathrm{max}}}~~~\sum_{n=1}^{N} \tau[n]\nonumber\\
&\quad\text{s.t.}\quad\eqref{p31.cst1}, \eqref{p31.cst2}\\
&\hspace{-10mm}a_{0,i}\tilde{l}_i[n] -  \tau[n] \geq  a_{1,i}\sqrt{\tilde{l}^{lo}_i[n]} \nonumber\\
&\hspace{-10mm}+ \frac{a_{1,i}}{2\sqrt{\tilde{l}^{lo}_i[n]}}(\tilde{l}_i[n]-\tilde{l}^{lo}_i[n]),~i\in\{u,~d\},~\forall n\in\N
\end{align}
\end{subequations}
\Prob{3.2} is a convex optimization problem, and thus it can be solved efficiently. In order to obtain the integer solution to \Prob{3}, denoted by $\LL^\star$, one can simply round down the noninteger solution to \Prob{3.3} such that $\LL^\star= \lfloor\widetilde{\LL}^{opt}\rfloor$, where $\lfloor x\rfloor$ indicates the largest integer less than or equal to $x$. We note that such a rounding approach provides a  lower-bound solution to the original integer problem as the objective function is a monotonically increasing function of the blocklength and none of the constraints is violated.


\subsection{3D UR Trajectory Optimization}
This subsection explores the joint optimization of the UR's motion and altitude. In light of this, we recast \Prob{1} to optimize $\Q$, while keeping the other variables fixed:
\begin{subequations}
\begin{align}\label{trj_subprob}
\Prob{4}:& \stackrel{}{\underset{\{\q_r,~\pmb{\tau}\}}{\mathrm{max}}}~~~\sum_{n=1}^{N} \tau[n] \nonumber\\
&~~~~~~\text{s.t.}~~~~~ \cst{1}-\cst{4},\\
&\log_2(1+\gamma_r[n]) - b_0\sqrt{1-(1+\gamma_r[n])^{-2}}  \geq b_2 \tau[n] \nonumber\\
&+ b_1,~ \forall n\in\N \label{p4.cst2}\\
&\log_2\left(\frac{1+\gamma_b[n]}{1+\tilde{\gamma}_{re}[n]}\right) - c_0\sqrt{1-(1+\gamma_b[n])^{-2}} \nonumber\\
&- c_1\sqrt{1-(1+\tilde{\gamma}_{re}[n])^{-2}} \geq c_2 \tau[n],~ \forall n\in\N \label{p4.cst3}
\end{align}
\end{subequations}
where 
\[b_0 = \frac{\Qi(\varepsilon_r)\log_2 \e}{\sqrt{l_u[n]}},~b_2 = \frac{1}{l_u[n](1-\varepsilon_r)},\]
\[b_1 = \log_2\left({1+\tilde{\gamma}_{ae}[n]}\right) - \sqrt{\frac{V(\tilde{\gamma}_{ae}[n])}{l_u[n]}}\Qi\left( \eta_e\right),\]
\[c_0 \hspace{-1mm}=\hspace{-1mm} \frac{\Qi(\varepsilon_b)\log_2 \e}{\sqrt{l_d[n]}},~c_1 \hspace{-1mm}=\hspace{-1mm} \frac{\Qi(\eta_e)\log_2 \e}{\sqrt{l_d[n]}},~c_2 \hspace{-1mm}=\hspace{-1mm} \frac{1}{l_d[n](1\hspace{-1mm}-\hspace{-1mm}\varepsilon_b)}.\]
We stress that \Prob{4} is still a nonconvex optimization problem due to nonconvex constraints \eqref{p4.cst2} and \eqref{p4.cst3}. In the following, we focus on transforming these constraints into convex approximations to make the problem tractable.
\subsubsection{Convex reformulation of \eqref{p4.cst2}} First, we equivalently write \eqref{p4.cst2} in a more tractable way by introducing nonnegative slack variables $\pmb{\lambda}=\{\lambda_1[n], \lambda_2[n],~\forall n\}$ and $\pmb{\beta}=\{\beta_1[n],~\forall n\}$:
\begin{subequations}
\begin{align}
    &\log_2(1+\lambda_1[n]) - b_0\beta_1[n]  \geq b_2 \tau[n] + b_1,~\forall n\in\N \label{ineq1}\\
    & \lambda_2[n] \geq \frac{\|\q_r[n] - \q_a \|^2}{p_a[n]\rho_r[n]},~\forall n\in\N\label{ineq3}\\
    &\lambda_1[n] \lambda_2[n] \leq 1,~\forall n\in\N\label{ineq2}\\
    &\beta^2_1[n] \geq 1-(1+\lambda_1[n])^{-2},~\forall n\in\N \label{ineq4}
\end{align}
\end{subequations}
We note that constraints \eqref{ineq1} and \eqref{ineq3} are convex, while \eqref{ineq2} and \eqref{ineq4}, introduced to ensure the smoothness of \Prob{4}, are nonconvex. We stress that \eqref{ineq3}-\eqref{ineq4} should hold with equality at the optimal point. Before proceeding further, we mention a fruitful lemma below. 
\begin{lemma}\label{lemma_3}
    Let $f(x,y)=\frac{1}{xy}$ with $x, y > 0$. At any given point $(x_0, y_0)$ in the domain of $f$, the following function serves as a global lower bound on $f(x,y)$, i.e., 
    \begin{align}\label{upperbound}
        f_{lb}(x,y;x_0,y_0) =  -\frac{x\,y_{0}+x_{0}\,y-3\,x_{0}\,y_{0}}{{x_{0}}^2\,{y_{0}}^2} \leq f(x,y).
    \end{align}
\begin{proof}
\textcolor{black}{Please see Appendix \ref{Appendix A}.}
\end{proof}
\end{lemma}
Thus, at a given point $(\q^{lo}_r,\pmb{\lambda}^{lo},\pmb{\beta}^{lo})$, the following serve as convex approximations of the constraints \eqref{ineq2} and \eqref{ineq4}.
\begin{subequations}\label{cst_cvx}
\begin{align} 
 & 1\leq  
  f_{lb}(\lambda_1[n],\lambda_2[n];\lambda^{lo}_1[n],\lambda^{lo}_2[n]),~\forall n\in\N\label{ineq2_cvx}\\
    &\ln(\beta_1[n]) + \ln(1+\lambda_1[n]) \geq   A_0(\lambda^{lo}_1[n];1) \nonumber\\
    &\qquad+ A_1(\lambda^{lo}_1[n];1)\left(\lambda_1[n]-\lambda^{lo}_1[n]\right),~\forall n\in\N \label{ineq5_cvx}
\end{align}
\end{subequations}
Note that \eqref{ineq2_cvx} follows from Lemma \ref{lemma_3} and \eqref{ineq5_cvx} follows from the concavity of the logarithm function.

\subsubsection{Convex reformulation of \eqref{p4.cst3}}
Introducing the nonnegative  slack variables $\pmb{\omega}=\{\omega_1[n], \omega_2[n],~\forall n\}$, $\pmb{\psi}=\{\psi_1[n],~\forall n\}$, $\mathbf{u}=\{u_1[n],~\forall n\}$, and $\mathbf{v}=\{v_1[n], v_2[n],~\forall n\}$, we can reformulate \eqref{p4.cst3} into the equivalent convex constraint 
  \begin{align}\label{p4.cst3_cvx}
    &\stackrel{}{\underset{{\mathcal{I}}}{\underbrace{\log_2\left({1+\omega_1[n]}\right) - c_0\psi_1[n]}}} \nonumber\\
    &\quad-  \stackrel{}{\underset{{\mathcal{II}}}{\underbrace{\log_2\left(1+\frac{1}{u_1[n]}\right) - c_1v_1[n]}}} \geq c_2 \tau[n],~\forall n\in\N
\end{align}
with the additional constraints for Part $\mathcal{I}$, given by
\begin{subequations}\label{partI_cst}
\begin{align}
    & \omega_2[n] \geq \frac{\|\q_r[n] - \q_b \|^2}{p_r[n]\rho_b[n]},~\forall n\in\N\label{34a}\\
         &\omega_1[n] \omega_2[n] \leq 1 ,~\forall n\in\N\\
    &\psi^2_1[n] \geq 1-(1+\omega_1[n])^{-2},~\forall n\in\N 
\end{align}
\end{subequations}
and the extra constraints for Part $\mathcal{II}$ 
\begin{subequations}\label{partII_cst}
    \begin{align}
    & u_1[n] \leq \frac{(\|\q_r[n] - \tilde{\q}_e \| - \Delta_e)^2}{p_r[n]\rho_b[n]},~\forall n\in\N\\
     &v^2_1[n] \geq 1-\left(1+v_2[n]\right)^{-2},~\forall n\in\N \\
     &u_1[n]v_2[n] \leq 1,~\forall n\in\N
     \end{align}
\end{subequations}
We note that some constraints arising from the reformulation of \eqref{p4.cst3_cvx} are nonconvex. Thus, using Lemma \ref{lemma_3} and \cite[Lemma 3]{TatarMamaghani2021}, we obtain  convex approximations of the constraints \eqref{partI_cst} and \eqref{partII_cst} at the  given local point $(\q^{lo}_r,\pmb{\omega}^{lo},\pmb{\psi}^{lo}, \mathbf{v}^{lo}, \mathbf{u}^{lo})$ as
\begin{subequations}\label{cvx_cst_rest}
    \begin{align}
 & 1\leq   f_{lb}(\omega_1[n],\omega_2[n];\omega^{lo}_1[n],\omega^{lo}_2[n]),~\forall n\in\N\\
     &\ln(\psi_1[n]) + \ln(1+\omega_1[n]) \geq   A_0(\omega^{lo}_1[n];1) \nonumber\\
     &\qquad+ A_1(\omega^{lo}_1[n];1)\left(\omega_1[n]-\omega^{lo}_1[n]\right)
,~\forall n\in\N    \\
  & p_r[n]\rho_b[n]u_1[n]+2\Delta_e\|\q_r[n] - \tilde{\q}_e\|  \leq -\|{\q^{lo}_r[n]}\|^2 \nonumber\\
  &\qquad+ 2\left({\q^{lo}_r[n]} - \tilde{\q}_e\right)^T\q_r[n] +d_0,~\forall n\in\N\label{cvx_cstabs}\\
     &  \ln(v_1[n]) + \ln(1+v_2[n]) \geq   A_0(v^{lo}_2[n];1) \nonumber\\
     &\qquad+ A_1(v^{lo}_2[n];1)\left(v_2[n]-v^{lo}_2[n]\right)
,~\forall n\in\N \\
& 
  u_1[n] \geq \frac{1}{v_2[n]},~\forall n\in\N
    \end{align}
\end{subequations}
where  $d_0 =  \|\tilde{\q}_e\|^2 + \Delta^2_e$. We now express the convex reformulation of subproblem \Prob{4} as
\begin{align}\label{trj_subprob_cvx}
\Prob{4.1}:& \stackrel{}{\underset{\{\q_r, \pmb{\tau}, \pmb{\lambda}, \pmb{\beta}, \pmb{\omega}, \pmb{\psi}, \mathbf{u},\mathbf{v}\}}{\mathrm{max}}}~~~\sum_{n=1}^{N} \tau[n] \nonumber\\
&\hspace{-5mm}\text{s.t.}~~~\cst{1}-\cst{4}, \eqref{ineq1}, \eqref{ineq3}, \eqref{cst_cvx}, \eqref{p4.cst3_cvx}, \eqref{34a}, \eqref{cvx_cst_rest}
\end{align}
Since \Prob{4.1} is convex, it can be efficiently solved.

\subsection{Overall Algorithm and Complexity Analysis}
\begin{figure}[!t]
\resizebox{\columnwidth}{!}{%
\resizebox{0.5\columnwidth}{!}{%
 \removelatexerror
  \begin{algorithm}[H]{
  \caption{Overall EAST optimization algorithm}\label{myalgo1}
  1:~\textbf{Initialize:}~
  Iteration index $i=0$, choose a feasible local point ($\P^{(i)}, \widetilde{\LL}^{(i)},  \Q^{(i)}$);\\
  2:~\textbf{Repeat:} \\
  2.1:~Calculate $\east^{(i)}$, update the iteration index $i \gets i+1$;\\
  2.2:~Solve the subproblem \Prob{2.1}, then update $\P^{(i)}=\{\p^{(i)}_a[n], \p^{(i)}_r[n],~\forall n\}$;\\
  2.3:~Given $\left(\P^{(i)},\widetilde{\LL}^{(i-1)},  \Q^{(i-1)}\right)$, solve the subproblem \Prob{3.2}, then update $\widetilde{\LL}^{(i)}=\{\tilde{l}^{(i)}_u[n], \tilde{l}^{(i)}_d[n],~\forall n\}$;\\
  2.4:~Given $\left(\P^{(i)}, \widetilde{\LL}^{(i)},  \Q^{(i-1)} \right)$, initialize slack variables $\left(\pmb{\lambda}^{lo}, \pmb{\beta}^{lo}, \pmb{\omega}^{lo}, \pmb{\psi}^{lo}, \mathbf{u}^{lo}, \mathbf{v}^{lo}\right)$, solve the subproblem \Prob{4.1}, then update $\Q^{(i)}=\{\q^{(i)}_r[n]= [x^{(i)}_r[n], y^{(i)}_r[n], z^{(i)}_r[n]]^T,~\forall n\}$;\\
  2.5:~ Calculate $\east^{(i)}$ at the new local point $\left(\P^{(i)},\widetilde{\LL}^{(i)},  \Q^{(i)}\right)$;\\
  3:~\textbf{Until:} $\|\east^{(i)} - \east^{(i-1)}\| \leq \epsilon$;\\
  4:~\textbf{Return:} $\left(\P^{opt}, \LL^{opt},  \Q^{opt}\right) \gets (\P^{(i)}, \lfloor\widetilde{\LL}^{(i)}\rfloor,  \Q^{(i)})$;}
  \end{algorithm}}}
\end{figure}

\begin{table}[t]
\caption{Complexity analysis}
\vskip -1mm
\centering
\resizebox{\columnwidth}{!}{%
\begin{tabular}{l c} 
 \midrule  \midrule
 Problem & Complexity \\ [0.5ex] 
\midrule 
 Joint power allocation subproblem \Prob{2.1} & $\OO\left((7N)^2(12N+2)^\frac{3}{2}\right)$\\
Short-packet blocklength optimization subproblem \Prob{3.2} & $\OO\left((3N)^2(5N+2)^\frac{3}{2} \right)$\\
 3D UR trajectory design subproblem \Prob{4.1} & $\OO\left((13N)^2(24N+4)^{\frac{3}{2}} \right)$\\
 Overall EAST optimization algorithm  & $\OO\left(N^{3.5}\log_2(\frac{1}{\epsilon})\right)$
\\[1ex] 
\midrule
\end{tabular}}
\label{table:complexity}
\vskip -3mm
\end{table}
\textcolor{black}{In this subsection, we propose an overall iterative algorithm based on the BSCA approach summarized in Algorithm \ref{myalgo1}, which sequentially refines the uplink and downlink transmit power allocation and blocklengths as well as the 3D UAV trajectory way-points to maximize the EAST performance. Specifically, based on Algorithm \ref{myalgo1}, all  optimization blocks are first initialized with feasible values ($\P^{(i)}, \widetilde{\LL}^{(i)},  \Q^{(i)}$) to facilitate the optimization framework. 
Then, optimization blocks are alternatingly updated by solving the corresponding convex subproblems. The convergence criterion ensures that the optimization process terminates when the improvement in EAST becomes marginal, as determined by the convergence accuracy parameter $\epsilon$. It can be proved, similar to \cite{wang2022robust} and \cite{TatarMamaghani2020}, that the proposed alternating algorithm is guaranteed to converge to a local optimum commencing from a feasible point, since the objective function is non-decreasing with each iteration and the~\east~performance metric is finite-valued.}

\textcolor{black}{Moreover, the overall time complexity of Algorithm \ref{myalgo1} depends on the complexity of each convex subproblem and the iteration complexity. The convex conic optimizations are typically solved using the interior-point method, whose complexity can be approximated based on the number of optimization variables and convex constraints. Accordingly, we provide Table \ref{table:complexity} to summarize the computational cost of each subproblem and the complexity of the overall algorithm using big-O notation. The complexity order of Algorithm \ref{myalgo1}, computed approximately as $\OO\left(N^{3.5}\log_2(\frac{1}{\epsilon})\right)$, is polynomial, and thus our proposed design approach can be reasonably implemented for energy-constrained UAV-aided SPC-IoT scenarios.}

\section{Numerical Results and Discussion}\label{sec:numerical}
In this section, we demonstrate  the EAST enhancement achieved by our proposed optimization algorithm  for the considered UAV-aided SPC-IoT scenario. To
exhibit  the effectiveness of our joint trajectory and resource allocation design in Algorithm \ref{myalgo1}, labeled~\prop, we compare it with the following benchmark schemes:
\begin{itemize} 
     \item Benchmark 1: Resource Design with Fixed Trajectory (\rd), where the UR's trajectory is fixed, and the joint transmit power and blocklength optimizations in \Prob{2.1} and \Prob{3.2} are solved sequentially until convergence, resulting in $(\P^{opt}, \LL^{opt})$ to improve EAST.
         
     \item Benchmark 2:      Trajectory Design with Fixed Resources (\td), where the communication resources $(\P, \LL)$ are fixed and only the UR's 3D trajectory is optimized using subproblem \Prob{4.1} in a sequential manner to achieve $\Q^{opt}$, maximizing the EAST performance.
\end{itemize}
Unless otherwise stated, the system parameters are set as in Table \ref{table:sysParams}. Further, the initial feasible trajectory of the UR, $\Q^{(0)}$, is taken to be on a direct line with fixed speed from the initial location to the final location. The uplink and downlink transmission blocklengths per timeslot are initialized as $l^{(0)}_u[n]=l^{(0)}_d[n]=\frac{L^{max}}{2}~\forall n$, and the transmit power of Alice and the UR per timeslot are adapted such that the maximum and total power budget constraints are satisfied, i.e., $p^{(0)}_a[n]=\min\{p^{max}_a, \frac{p^{tot}_a}{Nl^{(0)}_u[n]}\}~\forall n$ and $p^{(0)}_r[n]=\min\{p^{max}_r, \frac{p^{tot}_r}{Nl^{(0)}_d[n]}\}~\forall n$. The slack variables $\left(\pmb{\lambda}^{lo}, \pmb{\beta}^{lo}, \pmb{\omega}^{lo}, \pmb{\psi}^{lo}, \mathbf{u}^{lo}, \mathbf{v}^{lo}\right)$  are initialized such that the corresponding constraints \eqref{ineq2}$-$\eqref{ineq4}, \eqref{partI_cst}, and \eqref{partII_cst}  are met with equality. 

\begin{table}[t]
\caption{Simulation parameters}
\centering
\resizebox{\columnwidth}{!}{%
\begin{tabular}{l c} 
  \midrule \midrule 
 \textbf{Simulation parameter (notation)} & \textbf{Value} \\ 
\midrule 
Alice's total power budget ($p^{tot}_a$) & $1$ kW \\
UR's total power budget ($p^{tot}_r$) & $1$ kW  \\
Alice's maximum transmit power ($p^{max}_a$)& $20$ dBm per channel use\\
UAV's maximum transmit power ($p^{max}_r$)& $20$ dBm per channel use\\
Reference channel power gain ($\beta_0$) & $-70$ dB\\
Terrestrial path-loss exponent ($\alpha$) & $3$\\
Channel noise power ($\sigma^2_r, \sigma^2_b, \sigma^2_e$) & $-140$ dBm\\
Minimum operational altitude ($H^{min}$) & $60$ m\\
Maximum operational altitude ($H^{max}$) & $120$ m\\
Maximum UR's horizontal speed ($v_{xy}$) & $30$ m/s\\
Maximum UR's vertical speed ($v_{z}$) & $5$ m/s\\
UAV's initial location ($\q_i$) & $[-500, -1000, 60]^T$ m\\
UAV's final location ($\q_f$) & $[1000, 500, 60]^T$ m\\
Alice's location ($\q_a$) & $[-700,0, 0]^T$ m\\
Bob's location ($\q_b$) & $[700,0, 0]^T$ m\\
Eve's estimated location ($\tilde{\q}_e$) & $[-500,900, 0]^T$ m\\
Eve's location uncertainty factor ($\Delta_e$) & $10$ m\\
Transmission period ($\delta_t$) & $1$ s\\
Maximum end-to-end latency tolerance ($L^{max}$) & 400\\
Mission time ($T$) & $100$ s\\
Bob’s decoding error probability ($\varepsilon_b$)& $10^{-3}$\\
UAV’s decoding error probability ($\varepsilon_r$)& $10^{-3}$\\
Eve's security constraint ($\eta_e$) & $10^{-2}$\\
Convergence threshold factor  ($\epsilon$) & $10^{-2}$
\\
\midrule 
\end{tabular}}
\label{table:sysParams}
\end{table}

\begin{figure}[t]
\centerline{\includegraphics[width= \columnwidth]{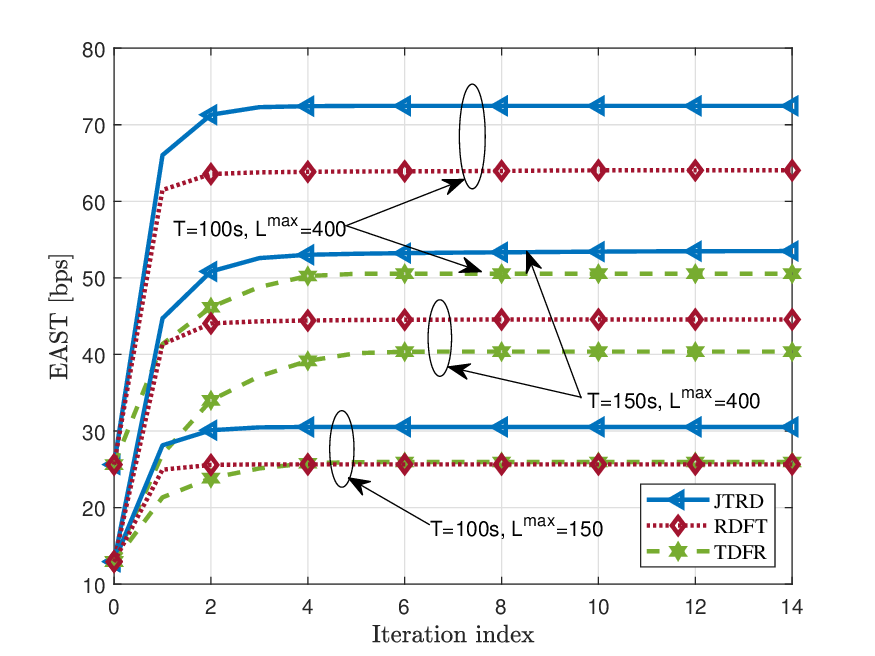}}
\caption{EAST performance vs. iteration index for different designs.}
\label{sim:fig1}
\end{figure}
In Fig. \ref{sim:fig1}, we plot the EAST against the iteration index for all schemes with different mission times $T=\{100, 150\}$ s  and $L^{max}=\{150, 400\}$ to verify the quick convergence of Algorithm \ref{myalgo1} and  the validity of our analysis, as well as to demonstrate the performance advantage of our joint design. We see that all algorithms converge quickly in just a few iterations. Our proposed \prop~approach achieves the best EAST performance in all cases.
For example, \prop~can reach up to $73$ bps, approximately $15\%$ more than \rd, $43\%$ better than \td, and nearly three times the EAST of the initial feasible setting for the case with $T=100$ s and $L^{max}=400$. Fig. \ref{sim:fig1} shows that using the baseline trajectory and optimizing the resource allocation is more important for SPC than optimizing the trajectory for a fixed resource allocation, while the joint design of both is clearly preferable. Furthermore, \rd~outperforms \td~for the considered system setup with $L^{max}=400$, indicating the significance of radio resource management and blocklength optimization when the maximum end-to-end tolerable delay is relatively high. However, reducing the end-to-end delay tolerance, e.g., from  $L^{max}=400$ to $L^{max}=150$, significantly decreases the performance gap between~\td~and~\rd, indicating that the trajectory design becomes critical for a stricter delay requirement. Additionally, we observe from the figure that as the mission time increases from $T=100$~s to $T=150$~s, the EAST performance degrades for all schemes with $L^{max}=400$. This trend can be intuitively explained by the fact that EAST is inversely proportional to the mission time $T$, according to \eqref{opt_prob}. Nevertheless, with a higher $T$, the UR has more flexibility in finding the best locations for secure information relaying, and hence the total amount of information securely exchanged between Alice and Bob is expected to increase, which also contributes to the EAST enhancement. Thus, owing to such a trade-off, it appears that the EAST should asymptotically 
converge to a small but non-zero value as $T$ gets sufficiently large.

\begin{figure}
\centering
\begin{subfigure}{\columnwidth}
  \centering
  \includegraphics[width=\linewidth]{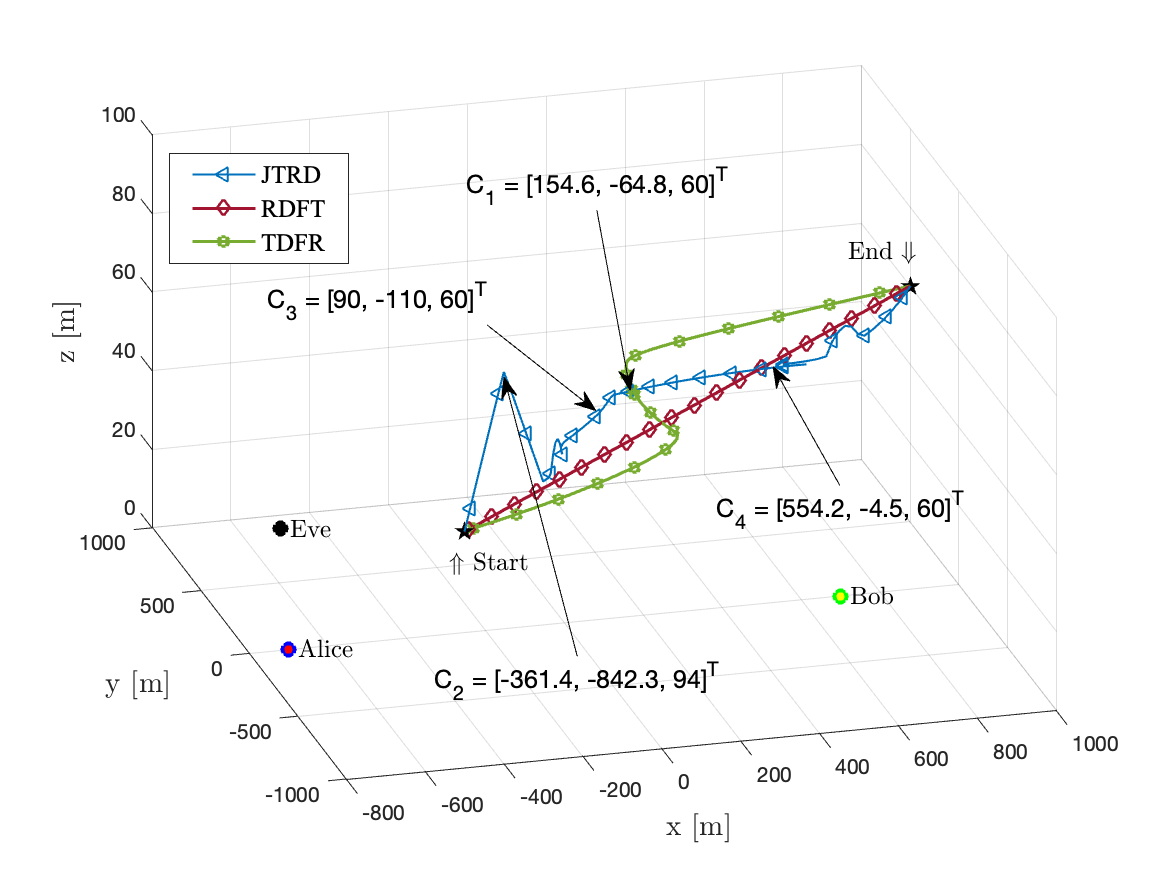}
  \caption{\textcolor{black}{UAV's 3D trajectory profile.}}
  \label{sim:fig2traj}
\end{subfigure}%

\vskip 5pt

\begin{subfigure}{\columnwidth}
  \centering
  \includegraphics[width=\linewidth]{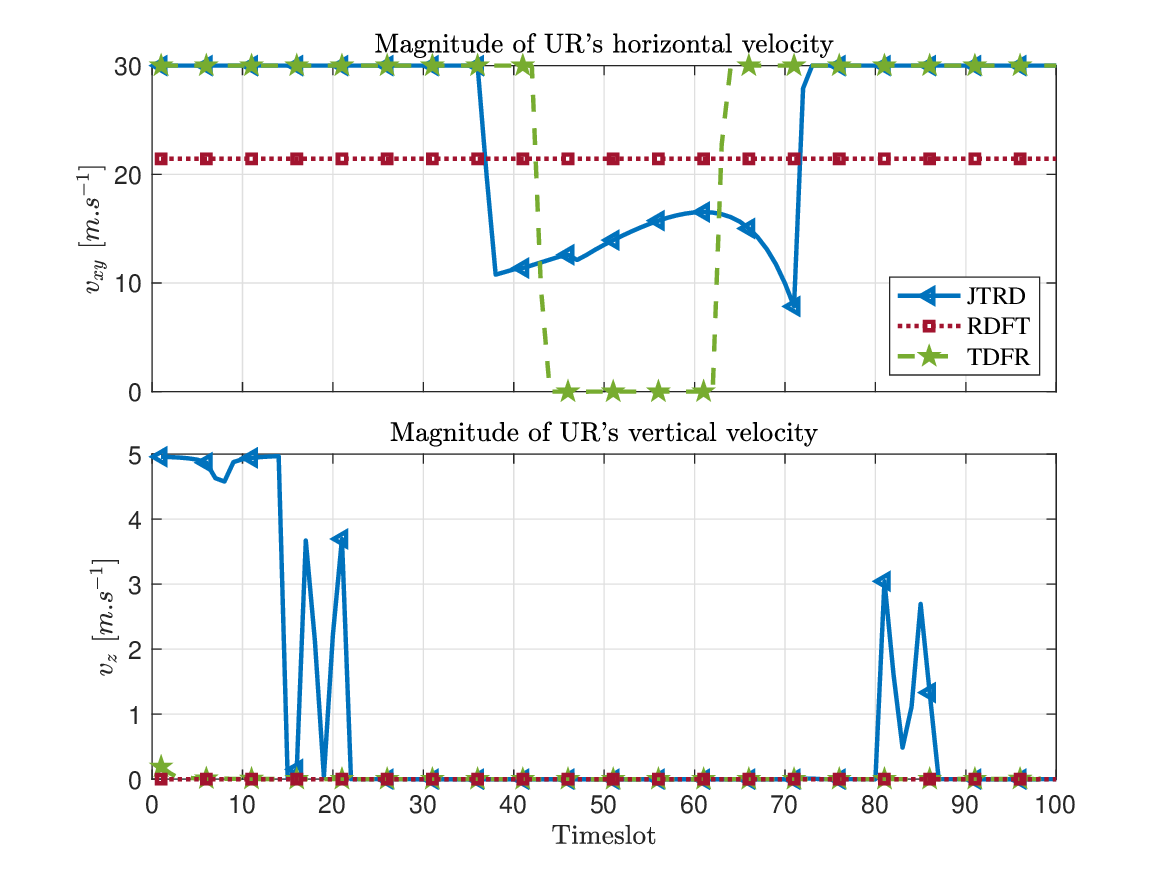}
  \caption{Velocity profile vs. timeslot.}
  \label{sim:fig2vel}
\end{subfigure}
\caption{Designed UAV's 3D trajectory and velocity according to different algorithms.}
\label{sim:fig2}
\end{figure}



\textcolor{black}{Fig. \ref{sim:fig2} illustrates the 3D trajectory and velocity profiles of the UR for different design scenarios based on specific user locations. For \td, we observe from Figs. \ref{sim:fig2}(\subref{sim:fig2traj}) and \ref{sim:fig2}(\subref{sim:fig2vel}) that the UR attempts to fly at the minimum altitude with maximum velocity while heading towards a location between Alice and Bob at the coordinates $\mathbf{C}_1 = [176, -67, 60]^T$ m, and staying aloft at that point as long as possible between $T=44$~s and $T=63$~s. This stretched-S path planning greatly improves the EAST performance compared with the initial direct-path trajectory with fixed velocity, but not as much as \rd~and our proposed~\prop, as demonstrated in Fig. \ref{sim:fig1}.  Nevertheless, when both the trajectory design and resource optimization are taken into account as in the proposed \prop~design, the UR demonstrates a substantially more effective 3D navigation, capitalizing on its altitude adjustment ability to considerably enhance the EAST performance in comparison with the other benchmarks. Specifically, the effective trajectory for the \prop~scheme, according to the path illustrated in Fig. \ref{sim:fig2}(\subref{sim:fig2traj}), requires the UR to fly with full horizontal speed from the starting location $\q_i$ while adaptively adjusting its vertical velocity until the UR reaches a peak at coordinates $C_2=[-361.4, -842.3, 94]^T$ m to further degrade the eavesdropping link. Then the UR sharply decreases its altitude to the constrained minimum value while heading to the position marked as $\mathbf{C}_3=[90, -110,60]^T$ m. A sharp decrease in horizontal velocity occurs at $T=36$~s while maintaining zero vertical velocity, enabling the UR to move with relatively low speed from position $\mathbf{C}_3$ to a location with coordinates $\mathbf{C}_4=[562, -4, 60]^T$ m, to improve the overall EAST performance. Finally, the UR travels along an arc tilted towards Bob while maximally increasing its velocity at time $T=72$~s and maintaining this horizontal velocity so that the last part of the mission from  $\mathbf{C}_4$ to the final location $\q_f$ can be completed by the end of the specified time frame. Overall, this observation highlights the necessity of robust 3D trajectory designs to enhance the secrecy performance of aerial relaying with SPC.}

\begin{figure}
\centering
\begin{subfigure}{\columnwidth}
  \centering
  \includegraphics[width=\linewidth]{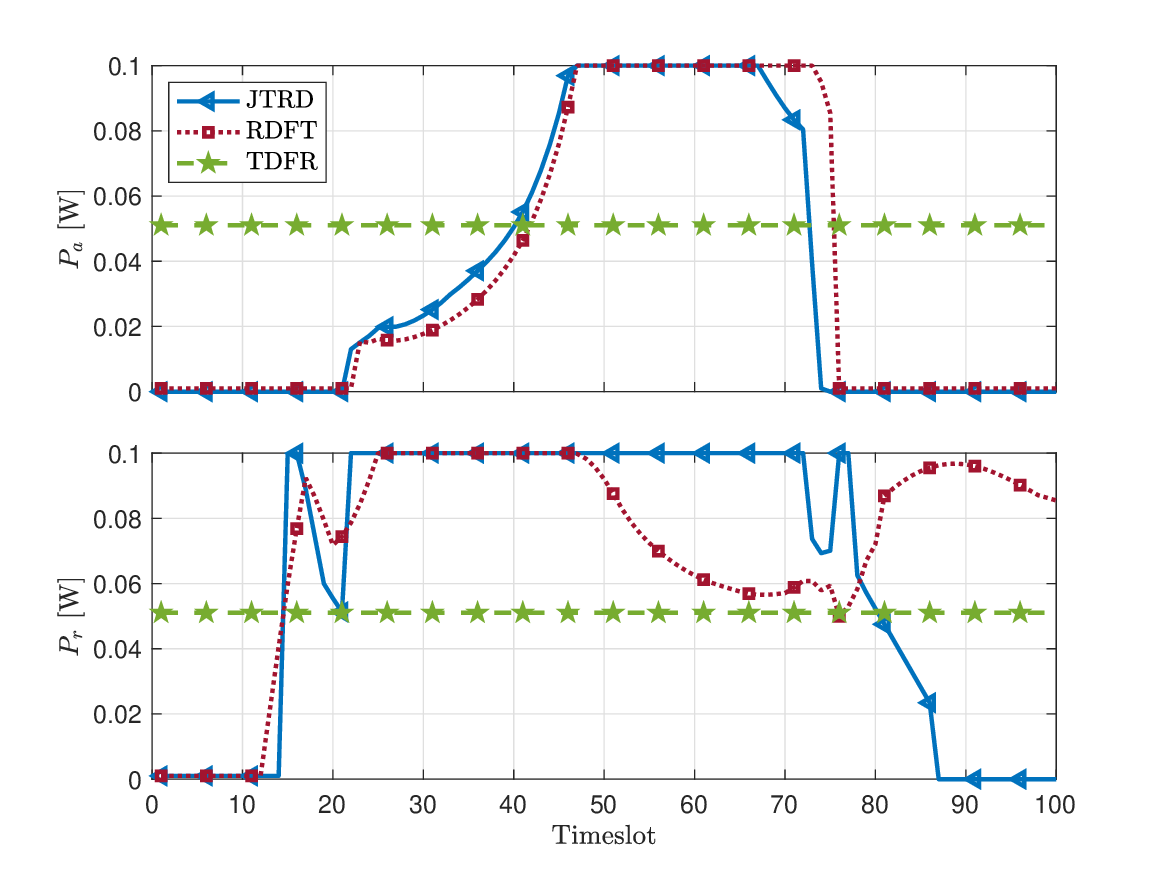}
  \caption{Power allocation vs. timeslot.}
  \label{sim:fig4pow}
\end{subfigure}%

\vskip 5pt

\begin{subfigure}{\columnwidth}
  \centering
  \includegraphics[width=\linewidth]{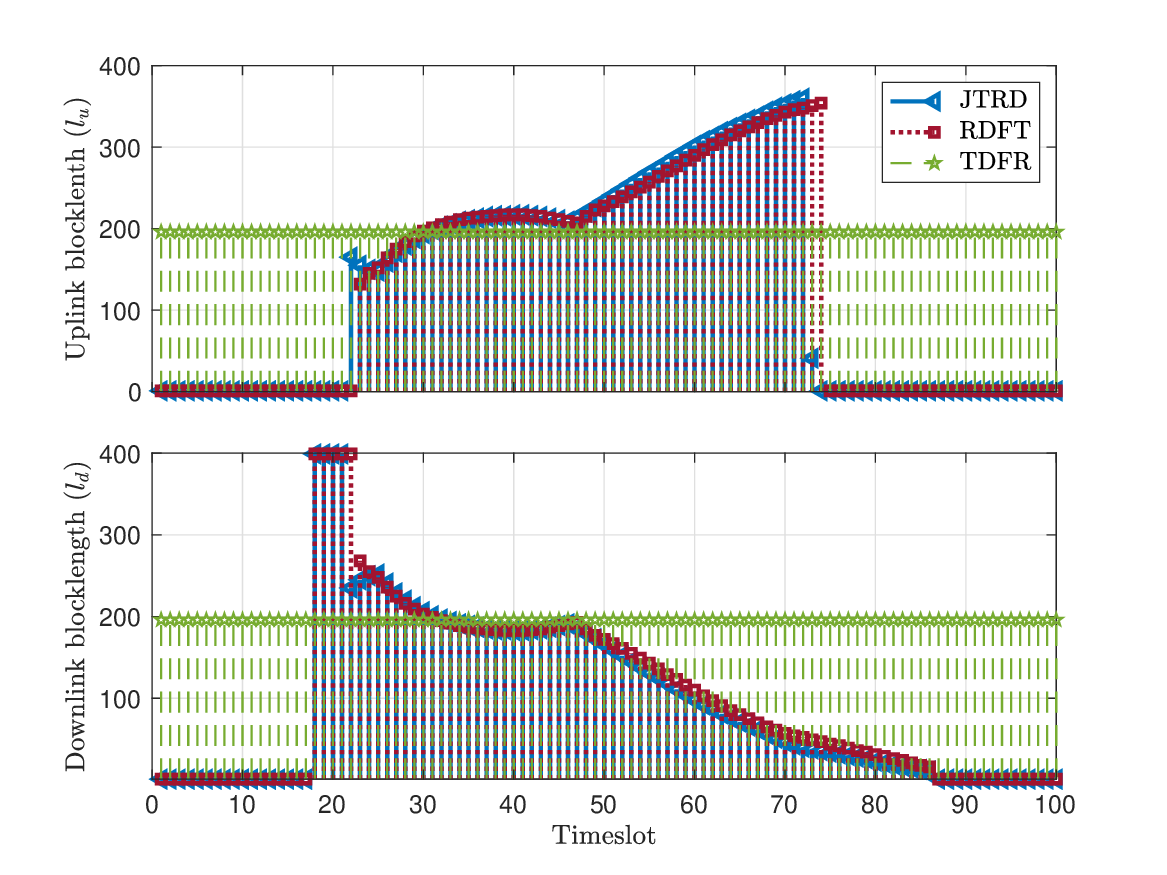}
  \caption{Blocklength vs. timeslot.}
  \label{sim:fig4blocklength}
\end{subfigure}
\caption{Power allocation and optimized blocklength profiles for different designs.}
\label{sim:fig4}
\end{figure}



Fig. \ref{sim:fig4} depicts the optimized transmit power profiles of Alice and the UR,  as well as both the uplink and downlink coding blocklengths for different schemes. It is evident that the adoption of fixed power allocation and equal blocklengths is suboptimal. We can see from Fig. \ref{sim:fig4}(\subref{sim:fig4pow}) that when the channel quality of the main link is worse than the eavesdropping link, proper resource management results in reserving the transmit power and blocklength resources for the moment when better communication channels can be obtained, for example between  timeslots $T=20$ s and $T=74$ s. Furthermore,   we observe from Fig. \ref{sim:fig4}(\subref{sim:fig4blocklength}) that when the UR is farther from Bob,  larger downlink coding blocklengths are adopted, and they reduce in length as the UR approaches Bob. As the UR flies away from Alice, the proposed algorithm efficiently increases Alice's transmit power and uplink blocklength to improve the uplink SR, which ultimately enhances the EAST performance. We note that Alice does not transmit with full power, particularly between timeslots $T=20$ s and $T=45$ s, since the secrecy is jeopardized by the enhancement of the wiretap link. As far as \prop~is concerned, it is generally expected that the UR will maximize the relaying power for the sake of EAST improvement.
\begin{figure}[t]
\centerline{\includegraphics[width= \columnwidth]{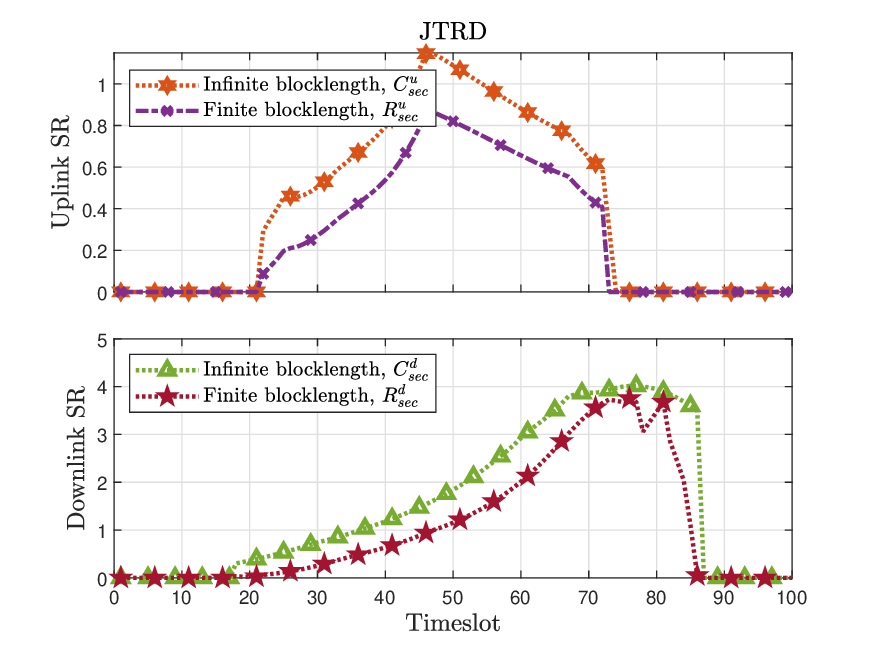}}
\caption{\textcolor{black}{Infinite/finite blocklength SR vs. timeslot for the proposed \prop~scheme.}}
\label{sim:fig6}
\end{figure}

In Fig. \ref{sim:fig6}, the SR of both uplink and downlink transmissions are presented versus time for~\prop~for both the finite and infinite blocklength cases. We can see from the curves that nonzero SR can be achieved when the UAV relay is properly located relative to Alice and Bob for the task of relaying. This goal is achieved between timeslots $T=20$ s and $T=74$ s.
Further, the results reinforce the fact that the adoption of SPC leads to notable a decrease in the achievable SR compared with the conventional infinite blocklength assumption. 
This underscores the need for a different approach to system design for SPC scenarios as compared to conventional systems, and emphasizes the importance of carefully considering system parameters in order to achieve optimal performance, avoiding dramatically suboptimal designs.

\begin{figure}[t]
\centerline{\includegraphics[width= \columnwidth]{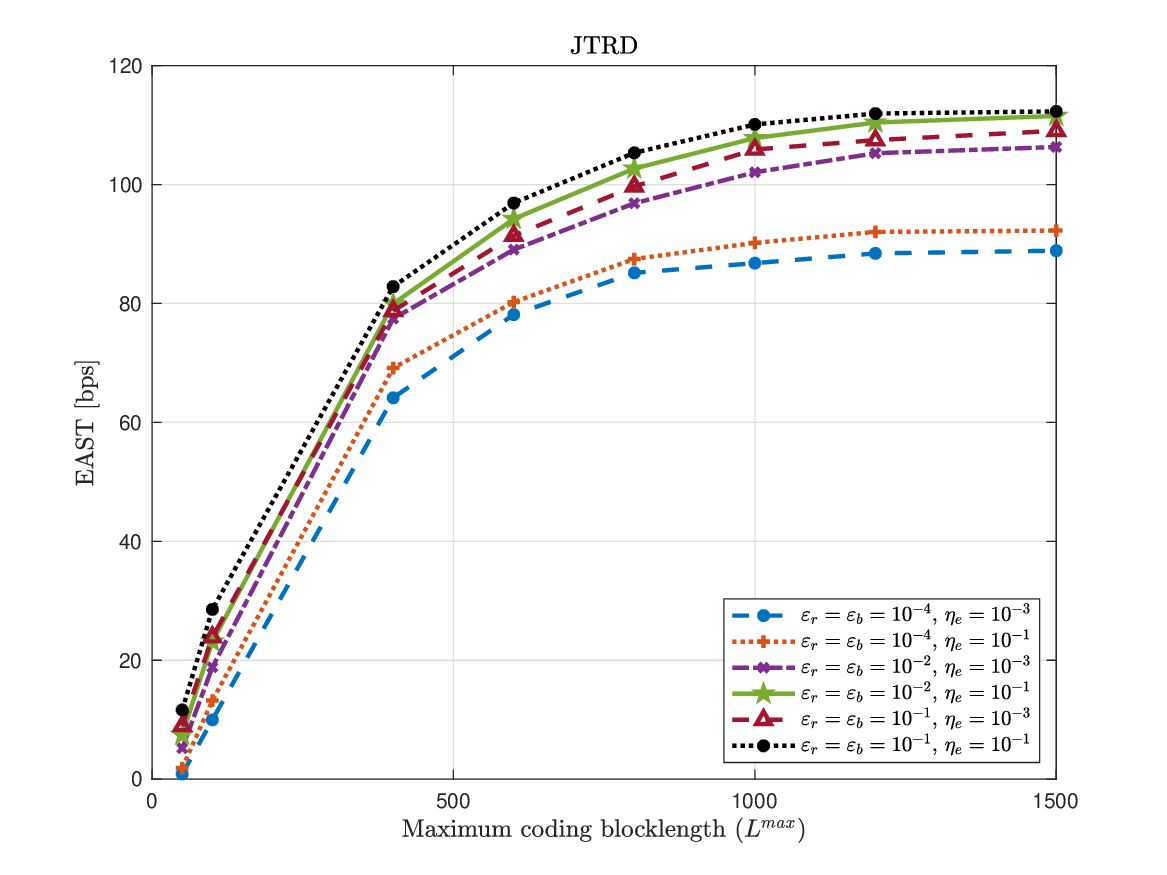}}
\caption{\textcolor{black}{Effect of maximum coding blocklength on the EAST performance of JTRD with different levels of security and reliability.}}
\label{sim:fig7}
\end{figure}
Fig. \ref{sim:fig7} depicts the effects of the end-to-end delay tolerance represented by the maximum coding blocklength $L^{max}$, as well as the reliability requirement and information leakage constraint on the EAST performance in the proposed \prop~scheme. It can be observed from the curves that the larger the maximum blocklength, the higher the EAST up to a certain level for all the settings. Indeed, increasing $L^{max}$ can potentially increase both the uplink and downlink blocklengths, which in turn improves the overall EAST performance by reducing the value of the subtractive terms in the objective function introduced by the finite blocklength. Also, from Fig. \ref{sim:fig7} it can be seen that the EAST experiences a ceiling phenomenon for all the security and reliability levels when $L^{max}$ is sufficiently increased, implying that no further improvement in EAST can be achieved by the joint power and trajectory design due to the power budget limitations. Furthermore, we can observe that when the reliability and/or security requirements are relaxed, as indicated by larger values for $\varepsilon_b$, $\varepsilon_r$, and $\eta_e$, a higher EAST can be achieved by utilizing proper trajectory design and efficient resource management.

\begin{figure}[t]
\centerline{\includegraphics[width= \columnwidth]{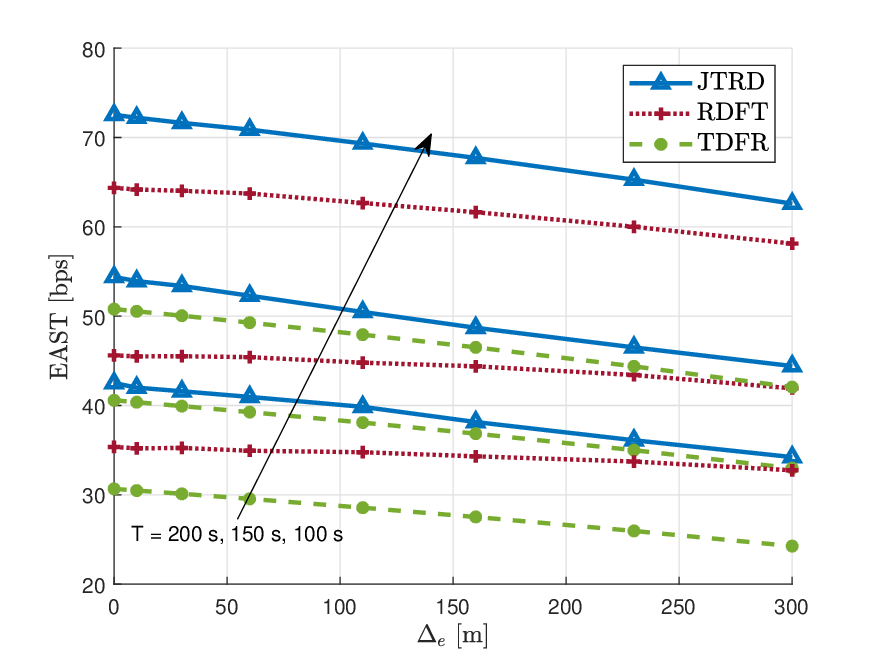}}
\caption{\textcolor{black}{Effect of Eve's location uncertainty on the EAST performance for different schemes.}}
\label{sim:fig8}
\end{figure}
\textcolor{black}{Fig. \ref{sim:fig8} illustrates how the uncertainty in Eve's location impacts the EAST performance for mission times $T=100$ s (relatively low), $T=150$ s, $T=200$ s (relatively high). Note that the longer the mission time, the higher the flexibility in terms of the UR's 3D mobility and resource allocation, which in turn generally improves the effective number of securely exchanged information bits. However, according to its definition, EAST is also inversely proportional to $T$. This inherent trade-off leads to a degradation in EAST as $T$ increases for all considered scenarios, as illustrated in Fig. \ref{sim:fig8}. Furthermore, we see that our proposed \prop~design once again provides the best performance compared with the other methods regardless of the level of Eve's location uncertainty; nonetheless, the performance gap between \prop~and \rd~tends to reduce as $\Delta_e$ becomes larger, which indicates the significance of the uncertainty parameter in the system design. Fig. \ref{sim:fig8} also illustrates that the proposed approach is not highly sensitive to the uncertainty in Eve's location. For example, \prop~experiences only about a 10 bps loss in EAST for $T=100$ s when the location uncertainty increases from the ideal case $\Delta_e=0$ to a relatively high uncertainty, e.g., $\Delta_e=300$ m.  \rd~shows the highest robustness to the variation in $\Delta_e$, but it is less effective in scenarios where $\Delta_e$ is small.}

\section{Conclusions}\label{sec:conclusion}

\textcolor{black}{We proposed the design of a secure UAV-aided SPC system, leveraging a mobile DF relaying mechanism to transmit sensitive short packets from a remote IoT device to its receiver under stringent latency requirements and in the presence of a passive eavesdropper with imperfectly known location. We formulated a nonconvex problem to improve the EAST performance of the system considering security, reliability, latency, and mobility constraints, where we aim to optimize crucial design parameters including uplink and downlink coding blocklengths, transmit powers, and 3D UAV trajectory.
To tackle the inherent nonconvexity of the optimization problem, we introduced a computationally efficient algorithm based on the BSCA approach, effectively dividing the problem into manageable subproblems and solving them separately via convex optimization, which ensures the convergence of the overall algorithm to a locally optimal solution
with relatively low-complexity. Through extensive simulations, we demonstrated the superiority of our proposed \prop~scheme in terms of EAST performance compared to benchmark schemes \rd~and~\td.  Our findings suggest that both uplink and downlink blocklengths should be dynamically adjusted based on the location of the UAV along its trajectory, indicating their pivotal role in improving EAST. In conclusion, our joint design for the proposed four-node system is a promising solution for secure and reliable
communications in UAV-IoT networks with SPC, particularly
in environments with limited power and computing resources.
}

\appendices
\numberwithin{equation}{section}
\makeatletter 
\newcommand{\section@cntformat}{Appendix \thesection:\ }
\makeatother

\section{\textcolor{black}{Proof of Lemma \eqref{lemma_3} }}\label{Appendix A}

\textcolor{black}{To commence the proof, we first calculate the Hessian of $f(x,y)= \frac{1}{xy}$, denoted by $\HH_f$, as
\begin{equation}
\hspace{-3mm}\HH_f \hspace{-0.5mm}=\hspace{-0.5mm} \begin{bmatrix}
\dfrac{\partial^2 f(x,y)}{\partial x^2} & \dfrac{\partial^2 f(x,y)}{\partial x \partial y} \\
\dfrac{\partial^2 f(x,y)}{\partial y \partial x} & \dfrac{\partial^2 f(x,y)}{\partial y^2}
\end{bmatrix}\hspace{-1mm}=\hspace{-1mm} \begin{bmatrix} \frac{2}{x^3\,y} & \frac{1}{x^2\,y^2}\\ \frac{1}{x^2\,y^2} & \frac{2}{x\,y^3} \end{bmatrix}.
\end{equation}
Next, we prove that $\HH_f$ is a positive definite matrix. Note that a Hessian matrix is positive definite if and only if all its eigenvalues are positive. Given that Hessian matrices with continuous second-order derivatives are Hermitian by construction, they possess real eigenvalues. Furthermore, a real symmetric matrix is positive definite if and only if all of its leading principal minors are positive, as per Sylvester's criterion \cite{strang2022introduction}. It is evident that all leading principal minors of $\mathbf{H_f}$, namely the first-order leading principal minor $(\mathbf{H_f})_{1,1}=\frac{2}{x^3y}$ and the second-order leading principal minor (a.k.a. the Hessian determinant) $|\mathbf{H_f}|= \frac{3}{x^4y^4}$, are both positive since the variables $x$ and $y$ are both positive, as per the assumption in Lemma \ref{lemma_3}. With the positive definiteness of $\HH_f$ established, it follows that $f(x, y)$ exhibits joint convexity w.r.t. $x$ and $y$. Thus, since the restrictive first-order Taylor expansion of a convex function provides a global lower bound at the given point $(x_0,y_0)$ \cite{Boyd2006}, one can reach the inequality given by \eqref{upperbound}.}

\bibliographystyle{IEEEtran}

\bibliography{myReferences}

\end{document}